\newtheorem{theorem}{Theorem}[section]
\newtheorem{corollary}[theorem]{Corollary}
\newtheorem{remark}[theorem]{Remark}
\newtheorem{problem}[theorem]{Problem}
\newcommand{\IN}{\mathbb{N}}
\newcommand{\IR}{\mathbb{R}}
\newcommand{\IZ}{\mathbb{Z}}
\newcommand{\uojdl}{u_{D,0j}^{L}}    %Discretizacion traza Dirichlet u_0 
\newcommand{\ujdl}{u_{D,j}^{L}}      %Discretizacion traza Dirichlet u_j
\newcommand{\uojnl}{u_{N,0j}^{L}}    %Discretizacion traza Neumann u_0 
\newcommand{\ujnl}{u_{N,j}^{L}}      %Discretizacion traza Neumann u_j 
\newcommand{\vjl}{v_{j}^{L}}        %Discretizacion v_j
\newcommand{\zjl}{Z_{j}^{L}}        %Discretizacion Z_j
\newcommand{\ulmojdl}{u^{l,m}_{D,0j}}    %Constantes discretizacion traza Dirichlet u_0 
\newcommand{\ulmjdl}{u^{l,m}_{D,j}}      %Constantes discretizacion traza Dirichlet u_j
\newcommand{\ulmojnl}{u^{l,m}_{N,0j}}    %Constantes discretizacion traza Neumann u_0
\newcommand{\ulmjnl}{u^{l,m}_{N,j}}      %Constantes discretizacion traza Neumann u_j
\newcommand{\vlmjl}{v^{l,m}_{j}}        %Constantes discretizacion v_j
\newcommand{\zlmjl}{Z^{l,m}_{j}}        %Constantes discretizacion Z_j
\newcommand{\Unormal}{\widehat{\mathbf{n}}}
\newcommand{\NN}{\mathcal{N}}
\newcommand{\radio}{R}
\newcommand{\Zz}{Z}
\newcommand{\NORM}[1]{\left\| #1 \right\|  }
\newcommand{\DUAL}[2]{\langle #1 , #2 \rangle }
\newcommand{\trazaD}{\gamma_D}
\newcommand{\trazaN}{\gamma_N}
\newcommand{\trazas}{\boldsymbol{\gamma}}
\newcommand{\vvector}{\mathbf{v}}
\newcommand{\HHH}{\mathbb{H}}
\newcommand{\Xj}{\operatorname{\mathsf{X}}}
\newcommand{\XXX}{\operatorname{\mathbf{X}}}
\newcommand{\IHDH}{\operatorname{\mathbf{I}_{2\NN \times \NN}}}
\newcommand{\OM}{\operatorname{\mathsf{M}}}
\newcommand{\OOM}{\boldsymbol{\OM}}
\newcommand{\OA}{\operatorname{\mathsf{A}}}
\newcommand{\OOA}{\boldsymbol{\OA}}
\newcommand{\Sconduc}{\boldsymbol\sigma_{\NN \times 4\NN} }
\title{Cell Electropermeabilization Modeling via Multiple Traces Formulation and Time Semi-Implicit Coupling}
\author{Isabel A.~Mart\'inez \'Avila\thanks{Pontificia Universidad Cat\'olica de Chile, Santiago, Chile.}
\and Carlos Jerez-Hanckes\thanks{Facultad de Ingenier\'ia y Ciencias, Universidad Adolfo Ib\'a\~nez, Santiago, Chile.}
\and Irina Pettersson\thanks{Chalmers University of Technology and Gothenburg University, Sweden.}
}
\date{September 1, 2024.}
\begin{document}

\maketitle

% REQUIRED
\begin{abstract}
We simulate the electrical response of multiple disjoint biological 3D cells undergoing an electropermeabilization process. Instead of solving the boundary value problem in the unbounded volume, we reduce it to a system of boundary integrals equations--the local Multiple Traces Formulation--coupled with nonlinear dynamics on the cell membranes. Though in time the model is highly non-linear and poorly regular, the smooth geometry allows for boundary unknowns to be spatially approximated by spherical harmonics. This leads to spectral convergence rates in space. In time, we use a multistep semi-implicit scheme. To ensure stability, the time step needs to be bounded by the smallest characteristic time of the system. Numerical results are provided to validate our claims and future enhancements are pointed out.

\end{abstract}

Keywords: Electropermeabilization, Electroporation, Multiple Traces Formulation, Boundary integral equations, Semi-implicit scheme, Multistep Methods, Transmembrane potential.    

MSCcodes: 65R20, 92C37, 65M70, 65J15

\section{Introduction}

Electropermeabilization designates the use of short high voltage or electric field pulses to increase the permeability of the cell membrane and its potential to allow the access of non-permeant molecules \cite{KotnikRemsea2019,ROL06}. This technique is used to deliver therapeutic molecules such as drugs and genes into cells to treat cancer, perform genetic engineering, screen drugs, among others applications (cf.~\cite{KimLee2017}).

Theoretically, several models have been proposed to describe the reversible membrane electropermeabilization mechanism without rigorous proof. For instance, during electropermeabilization it is thought that aqueous pores are formed along the cell membrane---electroporation---thereby increasing the permeability of the membrane. Yet, this has not been experimentally observed to occur for voltages used in practice. The pores are either too small to be seen by optical microscopy and too fragile for electron imaging. Only molecular dynamics' simulations have been able to demonstrate pore formation (cf.~\cite[Section 3]{KotnikRemsea2019}, \cite[Section 2.1]{ChoiKhooea2022}). Moreover, the application of external electric pulses triggers other physical and chemical cell mechanisms, many of them not fully understood due to the complex interactions at multiple length scales: from nanometers at the cell membrane to centimeters in tissues \cite{KotnikRemsea2019}. ``{\it Therefore, while the term electroporation is commonly used among biologists, the term electropermeabilization should be preferred in order to prevent any molecular description of the phenomenon}" \cite{ROL06}.

Still, mathematical models and numerical methods can lead to a better understanding of the different underlying phenomena. For instance, Neu and Krassowska \cite{NeuKrassowska1999} consider a pure electroporation process by modeling the nanoscale phenomena involved in the creation and resealing of the cell membrane pores, and apply homogenization theory to derive nonlinear-in-time dynamics. Well-posedness of the Neu-Krassowska model and a new one including anisotropies are derived in \cite{AmmariWidlakea2016}. Alternatively, in \cite{KavianLeguebeea2014} the authors propose a phenomenological model that forgoes the ab initio understanding of the mechanisms involved. A more complete phenomenological model splits the electroporation process into two different stages: conducting and permeable \cite{LeguebeSilveea2014}. This model also takes into account the diffusion and electric transport of non-permeable molecules. In \cite{GuittetPoignardea2017, MistaniGuittetea2019}, the authors discard particle diffusion and transport in \cite{LeguebeSilveea2014} to then apply the Voronoi Interface Method \cite{GuittetLepilliezea2015b} for its numerical approximation. Specifically, they construct a Voronoi mesh of the volume coupled to a ghost fluid method %\cite{LiuFedkiwea2000} 
to capture discontinuous boundary conditions. Further computational enhancements via parallelization are given in \cite{MistaniGuittetea2019}.
    
Instead of solving the volume boundary value problem, we recast the problem onto cell membranes via the local Multiple Traces Formulation (MTF) \cite{HiptmairJerez2012,CHJ13,HJP14,CHJ15,JPT15}. Originally introduced to solve acoustic wave transmission problems in heterogeneous scatterers, the local MTF considers independent trace unknowns at either side of the subdomains' boundaries to then enforce continuity conditions weakly via Calder\'on identities. In \cite{HJA17, HJH18} the method was successfully applied to model the electrical behavior of peripheral neurons by coupling the Laplace boundary integral operators with Hodgkin-Huxley nonlinear dynamics. The volume Laplace equations in intra- and extracellular media arises when assuming a quasi-static electromagnetic regime and one can show that for 2D and 3D the model is well posed. Numerically, the authors prove stability and convergence of the multistep semi-implicit time discretizations with low- and high (spectral) order spatial boundary unknown representations. Moreover, the numerical method proposed can be extended to model other nonlinear dynamics.

Following \cite{HJA17, HJH18}, we employ the above boundary integral equations to simulate the electric potential response of multiple disjoint cells in three dimensions when subject to electric pulses. Spatially, the boundary unknowns will be approximated by spherical harmonics, thereby allowing for spectral convergence rates in space. The nonlinear dynamics of the cell membrane follow \cite{KavianLeguebeea2014} which are solved by a multistep semi-implicit scheme. As we will see, the dynamic model is in fact highly nonlinear and portrays poor regularity, reason for which convergence rates in time are low when compared to spatial ones.

The rest of the paper is organized as follows. In Section \ref{sec:problem-setup} we formulate the problem and the corresponding non-linear dynamic model, and derive MTF. In Section \ref{sec:numapp}, we present a numerical scheme for spatial and time-domain discretizations, as well as discuss advantages and limitations of the proposed method. Computational results are provided in Section \ref{sec:numres}. Code validation experiments with analytic and overkill solutions confirm our theoretical results and open new avenues of research.

%%%%%%%%%%%%%%%%%%%%%%%%%%%%%%%%%%%%%%%%%%%%%%%%%%%%%%%%%%%%
%\section{Preliminaries}
%%%%%%%%%%%%%%%%%%%%%%%%%%%%%%%%%%%%%%%%%%%%%%%%%%%%%%%%%%%%
%Throughout we will employ the following notation and mathematical objects.
%%%%%%%%%%%%%%%%%%%%%%%%%%%%%%%%%%%%%%%%%%%%%%%%%%%%%%%%%%%%
% \subsection{General notation}\label{general-notation}
%%%%%%%%%%%%%%%%%%%%%%%%%%%%%%%%%%%%%%%%%%%%%%%%%%%%%%%%%%%%

%\todo{Incluir tabla de simbolos usados y referencias a ellos}
%For volume quantities we will employ upper case letters with lower ones employed for boundary ones.
    
%%%%%%%%%%%%%%%%%%%%%%%%%%%%%%%%%%%%%%%%%%%%%%%%%%%%%%%%%%%%
\section{Problem Statement and Boundary Integral Formulation}\label{sec:problem-setup}

\subsection{Dirichlet and Neumann traces}
In what follows we will need the notion of Dirichlet and Neumann traces, which we introduce below. Let $\Omega \subset \IR^d$, $d=1,2,3$, be an open non-empty domain not necessarily bounded with a Lipschitz boundary $\Gamma$. For $u\in C^\infty(\overline\Omega)$, Dirichlet and Neumann traces operators are defined as 
$$\trazaD u := u|_{\Gamma},\qquad \trazaN u := \left(\nabla  u \cdot \Unormal\right)|_{\Gamma},$$
where $\Unormal$ is the exterior unit normal and $\nabla$ denotes the gradient. For a Lipschitz $\Gamma$, the Dirichlet trace has a unique extension to a linear and continuous operator $\trazaD: H^1_{loc}(\Omega) \rightarrow H^{1/2}(\Gamma)$, where\footnote{We use the subscript $loc$ for locally integrable spaces, in particular, when $\Omega$ is unbounded (cf.~\cite{McLean2000})}
$$
\|v\|_{H^{\frac{1}{2}}(\Gamma)} :=
\left\{ \inf_{u\in H^1(\Omega)}\|u\|_{H^1(\Omega)}: \ \trazaD u = v\right\}.
$$
The space of bounded linear functionals on $H^{\frac{1}{2}}(\Gamma)$ is denoted by $H^{-\frac{1}{2}}(\Gamma)$. Let $\Delta$ denote the Laplace operator and define 
$$H^1_{loc}(\Delta,\Omega):=\left\{ u \in H^1_{loc}(\Delta,\Omega)\ : \ \Delta u \in L^2_{loc}(\Omega)\right\}.$$
One can also show that
the Neumann trace operator $\trazaN: H^1_{loc}(\Delta,\Omega) \rightarrow H^{-\frac{1}{2}}(\Gamma)$ is  continuous (see  \cite[Section 2.6 to 2.8]{SauterSchwab2010}).
$H^{\frac{1}{2}}(\Gamma)$ and $H^{-\frac{1}{2}}(\Gamma)$ are referred to as Dirichlet and Neumann trace spaces, respectively \cite[Sections 2.4, 2.6 and 2.7]{SauterSchwab2010}.
% The proof of the proposition below can be found in \cite[Proposition 2.5.2]{SauterSchwab2010}.

% \begin{proposition}\label{Gelfand} The triple
% $$H^{\frac{1}{2}}(\Gamma) \subset L^2(\Gamma) \subset H^{-\frac{1}{2}}(\Gamma)$$
% is a Gelfand triple, i.e.~the spaces are continuously and densely embedded. Therefore, the inner product $(\cdot, \cdot )_{L^2(\Gamma)}$ can be continuously extended to dual pairings on $H^{\frac{1}{2}}(\Gamma) \times H^{-\frac{1}{2}}(\Gamma)$ and $H^{-\frac{1}{2}}(\Gamma) \times H^{\frac{1}{2}}(\Gamma)$. 
% \end{proposition}

%%%%%%%%%%%%%%%%%%%%%%%%%%%%%%%%%%
%%%%%%%%%%%%%%%%%%%%%%%%%%%%%%%%%%
\subsection{Cell Electropermeabilization Model}
\label{sec:model}
%%%%%%%%%%%%%%%%%%%%%%%%%%%%%%%%%

We now present a continuous model used for the electropermeabilization process. Specifically, we assume a quasi-static electromagnetic problem in the intra- and extracellular domains coupled with non-linear dynamics at the cells' membranes. This coupling relies on enforcing adequate transmission conditions for potentials and currents across the cells. By a quasi-static regime, we imply that the frequency of the electric fields is low enough to discard any time delay in electromagnetic wave propagation \cite{PlonseyHeppner1967}.

We consider the electric interaction of $\NN \in\IN$ disjoint spherical cells located at $\mathbf{p_j} \in \IR^3$ with radii $\radio_j\in \IR^+$, $j \in \{1,...,\NN\}$. We define the interior space of the $j$th cell by $\Omega_j:=\{ \mathbf{x} \in \IR^3:\NORM{\mathbf{x}-\mathbf{p_j}}_2<\radio_j \}$, with its membrane being the boundary $\Gamma_j:=\partial \Omega_j=\{ \mathbf{x} \in \IR^3:\NORM{\mathbf{x}-\mathbf{p_j}}_2=\radio_j \}$. The extracellular medium is defined as the complement to the intracellular domain:
$\Omega_0:=\IR^3\setminus\bigcup_{j=1}^\NN\overline{\Omega}_j.$
An illustration for three cells is presented in Figure \ref{Geometria}.
%%%%%%%%%%%%%%%%%%%%%%%%%%%%%%%%%

For $j \in \{0,...,N\}$, each cell $\Omega_j$ is assumed to have constant conductivity  $\sigma_j \in \IR^+$. For $T\in \IR^+$, let $\phi_e: [0,T] \times \Omega_0 \rightarrow \IR$ be a given external potential. Let $u_0: [0,T] \times \Omega_0 \rightarrow \IR$ be the electric potential without excitation in the extracellular medium, so that the total external potential is $u_0^{tot}:=u_0+\phi_e$. We denote by $u_j: [0,T] \times \Omega_j \rightarrow \IR$, $j\in \{1,...,\NN\}$, the electric potential inside the $j$th cell, as in Figure \ref{Geometria}.
\begin{figure}[t]
	\begin{center}
		\includegraphics[width=.45\textwidth]{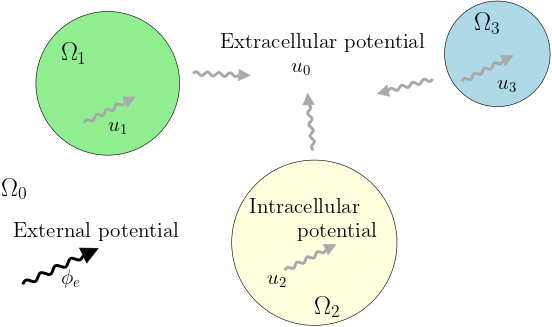}
		\caption{A system of three cells $\NN=3$.}
		\label{Geometria}
	\end{center}
\end{figure}

Across cell membranes $\Gamma_j$ the potential is discontinuous, the difference $v_j:=u_j - u_0$ is called the {\it membrane or transmembrane} potential and the 
flux is assumed to be continuous. Thus, our boundary value problem becomes\footnote{Observe that the Dirichlet and Neumann operators only act in the spatial variable $\mathbf{x}$. For a collection of spheres, we have added super-indices to emphasize where the traces are taken from, i.e., $0j$ for the traces arising from $\Omega_0$ onto $\Gamma_j$, and $j$ for the ones from $\Omega_j$ to $\Gamma_j$.}
\begin{align*}
    \mbox{div} \left(\sigma_j \nabla u_j \right) &=0, & (t,\mathbf{x})\in [0,T] \times\Omega_j, \  j \in \{0,...,\NN\},\\
    -\trazaD^{0j} u_0 +\trazaD^j u_j &=v_j + \trazaD^{0j} \phi_e,  & (t,\mathbf{x})\in  [0,T] \times \Gamma_j, \  j \in \{1,...,\NN\},\\
    \sigma_0\trazaN^{0j}u_0  + \sigma_j \trazaN^{j} u_j &= - \sigma_0 \trazaN^{0j}\phi_e,   & (t,\mathbf{x})\in [0,T] \times\Gamma_j, \   j \in \{1,...,\NN\}.
\end{align*}
Observe that Neumann jumps consider the inherited normal outward direction for each subdomain, so that $\gamma_N^{0j}=-\gamma_N^j$.

%%%%%%%%%%%%%%%%%%%%%%%%%%%%%%%%%%%%%%%
For the electropermeabilization process, we adopt the phenomenological model presented in \cite{KavianLeguebeea2014}. Specifically, at each cell $j \in \{1,...,\NN\}$, one has
\begin{align*}
    c_{m,j} \partial_t v_j + I_j^{{ep}}(v_j, \Zz_j) &= - \sigma_j \trazaN^j u_j & \mbox{on } [0,T]\times \Gamma_j,\\
  I_j^{{ep}}(v_j,\Zz_j) &= v_j (S_{L,j} + \Zz_j(t,v_j(t,\mathbf{x})) (S_{ir,j} - S_{L,j}))& \mbox{on } [0,T]\times \Gamma_j,
  % \\
  %   S_{m,j}(v_j(t,\mathbf{x}), \Zz_j(t,\mathbf{x})) &= S_{L,j} + \Zz_j(t,\mathbf{x}) (S_{ir,j} - S_{L,j}) &\mbox{on } [0,T]\times \Gamma_j,
\end{align*}
with $c_{m,j}$ denoting the membrane capacitance per unit area, and $I_j^{{ep}} $ being the electropermeabilization current. This last quantity  depends on the transmembrane potential $v_j$ and a $C^1$-function $\Zz_j:[0,T] \times \Gamma_j \to [0,1]$ (cf.~\cite[Lemma 7]{KavianLeguebeea2014}). For brevity, and slightly abusing the notations, we write $Z_j(t,\mathbf{x})$ instead of $Z_j(t, v_j(t,\mathbf{x}))$.
{The variable $\Zz_j(t, \mathbf{x})$ ``{\it measures in some way the likelihood that a given infinitesimal portion of the membrane is going to be electropermeabilized}" \cite[p 247]{KavianLeguebeea2014}}. Specifically, $\Zz_j$ enforces the surface membrane conductivity to take values between two parameters: the surface conductivity $S_{ir,j}$ for which the electropermeabilization process becomes irreversible, and the lipid surface conductivity  $S_{L,j}$. 
Indeed, when $\Zz_j=0$, the membrane conductivity equals the lipid conductivity, and there is no electropermeabilization; if $\Zz_j=1$, the membrane conductivity takes the maximal value above which electropermeabilization is irreversible. Following \cite{KavianLeguebeea2014}, {$\Zz_j$ satisfies} the ordinary differential equation\footnote{One can see that functions $Z_j$ will portray poor regularity--at most $C^1$--, a fact later observed numerically.}:
\begin{align*}
    \partial_t \Zz_j(t,\lambda) = \max \left( \frac{\beta_j(\lambda) - \Zz_j(t,\lambda)}{\tau_{ep,j}}, \frac{\beta_j(\lambda) - \Zz_j(t,\lambda)}{\tau_{res,j}}\right).
\end{align*}
Here, $\beta_j\in W^{1,\infty}(\mathbb{R};[0,1]):=\left\{ u \in L^{\infty}(\mathbb{R};[0,1])  \ : \ D^{\alpha} u \in L^{\infty}(\mathbb{R};[0,1]), |\alpha|\leq 1 \right \}$. If $\beta_j(v_j)-\Zz_j(t,v_j)$ is positive, the electric pulse is sufficiently intense to enlarge the electropermeabilized region with a characteristic time $\tau_{ep,j}$. Contrarily, if $\beta_j(v_j)-\Zz_j(t,v_j)$ is negative, the pulse is not strong enough to allow electropermeabilization and the membrane returns to its resting state, with a characteristic resealing time  {$\tau_{res,j}$}. Experimental observations suggest that $\tau_{res,j}\gg\tau_{ep,j}$.
\begin{remark}
    In general \cite{KavianLeguebeea2014}, one can use any function $\beta_j$ such that 
$\beta_j \in W^{1,\infty}(\IR)$, $ v \beta_j'(v) \in L^{\infty}(\IR)$, $\beta_j$ is non decreasing in $(0,\infty)$, $0 \leq \beta_j(v) \leq 1$, $ \lim_{v\to \infty} \beta_j(v) = 1$. In our case, we set $\beta_j$ as 
\begin{align}\label{beta-used}
    \beta_j(v):=\frac{1 + \tanh(k_{ep,j} (|v| - V_{rev,j}) )}{2},
\end{align}
wherein two additional parameters are introduced: the electropermeabilization  switch speed $k_{ep,j}$ between $S_{ir,j}$ and $S_{L,j}$, and $V_{rev,j}$, the transmembrane potential threshold for electropermeabilization to occur. The chosen $\beta_j$ \eqref{beta-used} satisfies the above conditions. This can be checked by recalling the properties of the hyperbolic functions $\tanh: \IR \rightarrow [-1,1]$ and $\text{sech}: \IR \rightarrow [0,1]$% \cite[Chapter 4]{NIST2010}
. We will assume that the threshold potential $V_{rev}$ is constant throughout the electropermeabilization process. 
\end{remark}
    In summary, the full electropermeabilization dynamic problem reads:
    \begin{problem}\label{dynamic-problemEP}
        Given  $T\in \IR^+$, an external potential $\phi_e \in C([0,T],H^1_{loc}\left( \Omega_0\right))$,  and the initial conditions $u_j^0 \in H^1\left(\Omega_j\right)$, and $\Zz_{j}^0 \in H^{\frac{1}{2}}(\Gamma_j)$, for $j=1,\ldots,\NN$, we seek $u_j \in C([0,T],H^1\left(\Omega_j\right))$, $v_j \in C([0,T],H^{\frac{1}{2}}(\Gamma_j))$, and $\Zz_{j} \in C([0,T],H^{\frac{1}{2}}(\Gamma_j))$ for $j \in \{1,...,\NN\}$ such that for $t\in[0,T]$, the following holds
        \begin{subequations}
            \begin{align}
           \mathrm{div} \left(\sigma_0 \nabla u_0 \right) &=0  & \mbox{in }\Omega_0,\\
                \mathrm{div} \left(\sigma_j \nabla u_j \right) &=0 & \mbox{in }\Omega_j,\label{CondicionLaplace}\\
                -\gamma _D^{0j} u_0 +\trazaD^j u_j &=v_j + \trazaD^{0j} \phi_e  &\mbox{on }  \Gamma_j, \label{CondicionDirichlet} \\
                \sigma_0\trazaN^{0j} u_0  + \sigma_j \trazaN^{j} u_j &= - \sigma_0 \trazaN^{0j} \phi_e   & \mbox{on } \Gamma_j, \label{CondicionNeumann} \\
                c_{m,j} \partial_t v_j + I_j^{{ep}}( v_j, \Zz_j) &= - \sigma_j \trazaN^j u_j  & \mbox{on } \Gamma_j,\label{nonlinear-condition}\\
                u_j(0,\mathbf{x})=u_j^0, \, Z_j(0,\mathbf{x})&=Z_j^0 & \mbox{in } \Omega_j \label{IC-uj-Z}\\
                u_0(0,\mathbf{x})&=u_0^0 & \mbox{in } \Omega_0, \label{IC-u0}\\
                u_0 & = \mathcal{O}(\|\mathbf{x}\|_2^{-1}) & \mbox{ as }\|\mathbf{x}\|_2\rightarrow \infty , \label{decay-condition}
            \end{align}
        \end{subequations}
        with $I_j^{{ep}}$ defined as:
        \begin{align}
        \label{current}
            I_j^{{ep}}(v_j, \Zz_j) := v_j \left(S_{L,j} + \Zz_j(t,v_j) (S_{ir,j} - S_{L,j}) \right),
        \end{align}
        where the $\Zz_j(t,\lambda)$ satisfy:
        \begin{align}
        \label{ODE}
            \partial_t \Zz_j(t,\lambda) = \max \left( \frac{\beta_j(\lambda) - \Zz_j(t,\lambda)}{\tau_{ep,j}}, \frac{\beta_j(\lambda) - \Zz_j(t,\lambda)}{\tau_{res,j}}\right)
        \end{align}
        with $\beta_j$ given by \eqref{beta-used} and parameters $\tau_{ep,j}, \tau_{res,j}$ described above.
    \end{problem}

As above, we write $Z_j(x,\mathbf{x})= Z_j(t, v_j(t, \mathbf{x}))$. Observe that \eqref{decay-condition} is the standard decay condition for the Laplace problem in three dimensions that guarantees that the problem is well posed.  Finally, the parameters of each cell, $c_{m,j}$, $V_{ep,j}$, $\tau_{ep,j}$ and $\tau_{res,j}$ might differ from cell to cell. In practical applications, these parameters depend on the cell type, e.g., cancer cells possess material properties different from  healthy cells in the same tissue \cite{OnemliJoofea2022}. 
    
%%%%%%%%%%%%%%%%%%%%%%%%%%%%%%%%%%%%%%%%%%%%
\subsection{Boundary integral formulation}
\label{sec:BIOs}
%%%%%%%%%%%%%%%%%%%%%%%%%%%%%%%%%%%%%%%%%%%%
Due to the unboundedness of the domain as well as the constant conductivity values inside intra- and extracellular domains, one can write Problem \ref{dynamic-problemEP} using boundary integral operators, thereby reducing the volume problem to a boundary one as in \cite{HiptmairJerez2012,HJA17,HJH18}. Moreover, this significantly reduces the degrees of freedom required to solve the problem.

\subsubsection{Boundary integral potential and operators}
The free space fundamental solution of the Laplace equation for a source located at $\mathbf{x}'$ satisfying the decay condition \eqref{decay-condition} is (\cite[Section 1.7]{Jackson2013})
\begin{align*}
	g\left(\mathbf{x},\mathbf{x}'\right):= \frac{1}{4\pi \NORM{\mathbf{x}-\mathbf{x}'}_2}, \quad \mathbf{x}\not = \mathbf{x}'. % \quad \mbox{and } -\Delta g \left(\mathbf{r},\mathbf{r'}\right)=\delta \left(\mathbf{r}-\mathbf{r'}\right).
\end{align*}
We recall the standard single and double layer operators defined for smooth densities $\psi$ and $\mathbf{x}\in\mathbb{R}^3\setminus \Gamma_j$:
\begin{align*}
	DL_{0j} (\psi)\left(\mathbf{x}\right)&:=	\int_{\Gamma_j}  \psi\left(\mathbf{x}'\right) \nabla g\left(\mathbf{x},\mathbf{x}'\right) \cdot \widehat{\mathbf{n}}_{0j} d\Gamma',&
	SL_{0j} (\psi)\left(\mathbf{x}\right)&:=	\int_{\Gamma_j}  {\psi\left(\mathbf{x}'\right) g\left(\mathbf{x},\mathbf{x}'\right) d\Gamma'},\\ 
	DL_j (\psi)\left(\mathbf{x}\right)&:=	\int_{\Gamma_j}  {\psi\left(\mathbf{x}'\right) \nabla g\left(\mathbf{x},\mathbf{x}'\right) \cdot \widehat{\mathbf{n}}_{j} d\Gamma'}, &
	SL_j (\psi)\left(\mathbf{x}\right)&:=	\int_{\Gamma_j}  {\psi\left(\mathbf{x}' \right) g\left(\mathbf{x},\mathbf{x}'\right) d\Gamma'},
\end{align*}
with the gradient being taken with respect to $\mathbf{r}'$, $\widehat{\mathbf{n}}_j$ being the exterior normal vector of $\Omega_j$. 
Since for each subdomain the normal is oriented outwardly, it holds that $\widehat{\mathbf{n}}_j = -\widehat{\mathbf{n}}_{0j}$. It can be shown that these operators are linear and continuous (cf.~\cite[Section 3.1]{SauterSchwab2010}, \cite[Section 3.1]{HJH18}), in the following Sobolev spaces:
\begin{align*}
  DL_{0j}&: H^{\frac{1}{2}}(\Gamma_j) \rightarrow H^1_{loc} \left(\IR^3 \setminus \cup_{j=1}^{\NN}\Gamma_j\right), & SL_{0j}&: H^{-\frac{1}{2}}(\Gamma_j) \rightarrow H^1_{loc}  \left(\IR^3 \setminus \cup_{j=1}^{\NN}\Gamma_j\right)  , \\
 DL_{j}&: H^{\frac{1}{2}}(\Gamma_j)\rightarrow H^1_{loc} \left(\IR^3 \setminus \cup_{j=1}^{\NN}\Gamma_j\right), & SL_{j}&: H^{-\frac{1}{2}}(\Gamma_j) \rightarrow H^1_{loc}\left(\IR^3 \setminus \cup_{j=1}^{\NN}\Gamma_j\right).
\end{align*}
We will write  $u_j$ in terms of these boundary potentials, and since we aim at rendering Problem \ref{dynamic-problemEP} onto the cells' boundaries, we will take traces of these potentials. This leads to boundary integral operators (BIOs), which are defined by taking the following averages \cite[Section 3.1.2]{SauterSchwab2010}:
\begin{align}
	V_{i,j}^0 &:=  \frac{1}{2} \left( \trazaD^{i} SL_{0j} + \trazaD^{0i} SL_{0j} \right) ,
	& V_{j}&:= \frac{1}{2} \left(  \trazaD^{0j} SL_{j} + \trazaD^{j} SL_{j} \right) ,\nonumber \\
	K_{i,j}^0&:= \frac{1}{2} \left(\trazaD^{i} DL_{0j} + \trazaD^{0i} DL_{0j} \right) ,
	&K_{j}&:= \frac{1}{2} \left(\trazaD^{0j} DL_{j} + \trazaD^{j} DL_{j} \right), \label{BIOS-definition}\\
	K^{*0}_{i,j}&:= \frac{1}{2} \left( - \trazaN ^{i} SL_{0j} + \trazaN ^{0i} SL_{0j}  \right),
	 & K^{*}_{j} &:= \frac{1}{2} \left( -\trazaN ^{0j} SL_{j}  + \trazaN ^{j} SL_{j} \right), \nonumber \\
	W_{i,j}^0 &:= -\frac{1}{2} \left( - \trazaN^{i} DL_{0j}  + \trazaN^{0i} DL_{0j} \right) ,
	& W_{j} &:=- \frac{1}{2} \left( -\trazaN^{0j} DL_{j} + \trazaN^{j} DL_{j} \right). \nonumber
\end{align}
One can show that these operators are linear and continuous \cite[Theorem 3.1.16]{SauterSchwab2010} in the following Sobolev spaces: 
\begin{align*}
	V_{{i},j}^0 &: H^{-\frac{1}{2}}(\Gamma_j) \rightarrow H^{\frac{1}{2}}(\Gamma_i),
	&V_{j}&: H^{-\frac{1}{2}}(\Gamma_j) \rightarrow H^{\frac{1}{2}}(\Gamma_j),\\
	W_{{i},j}^0&: H^{\frac{1}{2}}(\Gamma_j) \rightarrow H^{-\frac{1}{2}}(\Gamma_i),
	&W_{j}&: H^{\frac{1}{2}}(\Gamma_j) \rightarrow H^{-\frac{1}{2}}(\Gamma_j) ,\\
	K_{{i},j}^0&: H^{\frac{1}{2}}(\Gamma_j) \rightarrow H^{\frac{1}{2}}(\Gamma_i),
	&K_{j}&: H^{\frac{1}{2}}(\Gamma_j) \rightarrow H^{\frac{1}{2}}(\Gamma_j) ,\\
	K^{*0}_{{i},j}&: H^{-\frac{1}{2}}(\Gamma_j) \rightarrow H^{-\frac{1}{2}}(\Gamma_i),
	&K^*_{j}&: H^{-\frac{1}{2}}(\Gamma_j) \rightarrow H^{-\frac{1}{2}}(\Gamma_j).
\end{align*}
For smooth domains, the jump relations for the potentials across a closed boundary \cite[Theorem 3.3.1]{SauterSchwab2010} yield 
\begin{align*}
	V_{{i},j}^0 &=   \trazaD^{0{i}} SL_{0j},
	& V_{j}&=  \trazaD^{j} SL_{j},\\
	W_{{i},j}^0 &=-  \trazaN^{0{i}} DL_{0j},
	& W_{j} &=- \trazaN^{j} DL_{j}, \\
	  K_{{i},j}^0&= \trazaD^{0{i}} DL_{0j}\mbox{ with } {i} \not=j,
	 & K^{*0}_{{i},j} &= \trazaN ^{0{i}} SL_{0j}\mbox{ with } {i}  \not=j,\\
	K_{j,j}^0(\psi) &= \frac{1}{2}\psi +\trazaD^{0j} {DL_{0j}(\psi)} ,
	&K_{j}(\psi) &= \frac{1}{2} \psi +\trazaD^{j} {DL_{j}(\psi)} ,\\
	 K^{*0}_{j,j}(\psi) &= -\frac{1}{2} \psi + \trazaN^{0j} {SL_{0j}(\psi)},
	&K^*_{j}(\psi) &= -\frac{1}{2} \psi + \trazaN^j {SL_{j}(\psi)}.
\end{align*}
The next theorem allows us to reconstruct the electric potentials $u_j$ and $u_0$ from boundary layer operators. 
\begin{theorem}(\cite[Section 3]{SauterSchwab2010})\label{FormulaRepresentacion}
%Under the smoothness assumptions for $\Omega_j$, $j=0,\ldots,\NN$, 
The integral representation formulas for $u_j \in H^1(\Omega_j)$, $u_0 \in H^1_{loc}(\Omega_0)$ read
    \begin{subequations}
    	\begin{align}
    	\label{u0representacion}
    	u_0 &= - \sum_{{i}=1}^\NN DL_{0{i}} \left(\trazaD^{0{i}} u_0 \right) + \sum_{{i}=1}^\NN SL_{0{i}} \left(\trazaN^{0{i}}  u_0 \right) ,\\
    	\label{ujrepresentacion}
    	u_j&=	-DL_j \left(\trazaD^{j}  u_j \right) +SL_j \left(\trazaN^{j}  u_j  \right), \quad \forall j\in \{1, ..., \NN \}.
    \end{align}
    where $u_j$ are zero-valued on the complement of $\Omega_j$, for $j=0,\ldots,\NN$.
    \end{subequations}
\end{theorem}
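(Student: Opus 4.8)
The plan is to derive both identities from Green's representation formula (Green's third identity) for harmonic functions, and then lift the resulting pointwise statement to the stated Sobolev regularity by a density argument. The engine throughout is Green's second identity applied to a harmonic potential $u$ and the fundamental solution $g(\mathbf{x},\cdot)$, exploiting that $g$ is harmonic away from its pole while $\Delta_{\mathbf{x}'} g(\mathbf{x},\mathbf{x}') = -\delta(\mathbf{x}-\mathbf{x}')$ in the distributional sense, so that every volume term collapses to a point evaluation. I would first establish the formula for $u \in C^{\infty}(\overline{\Omega})$ and then invoke the continuity of the layer potentials $SL$, $DL$ and of the trace operators $\trazaD,\trazaN$ on the Sobolev spaces recorded earlier in the excerpt, together with the density of smooth functions in $H^1_{loc}(\Delta,\Omega)$, to conclude for general $u_j \in H^1(\Omega_j)$ and $u_0 \in H^1_{loc}(\Omega_0)$.

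For the bounded interior domain $\Omega_j$, I would fix an interior point $\mathbf{x}\in\Omega_j$ and apply Green's second identity to $u_j$ (harmonic by \eqref{CondicionLaplace}) and $g(\mathbf{x},\cdot)$ on $\Omega_j$. Since $\Delta u_j = 0$ and the pole of $g$ lies inside, the volume contribution reduces to $u_j(\mathbf{x})$, and the boundary integral over $\Gamma_j$ with the exterior normal $\Unormal_j$ splits into a single-layer term carrying $\trazaN^j u_j$ and a double-layer term carrying $\trazaD^j u_j$. Matching these against the definitions of $SL_j$ and $DL_j$ yields exactly $u_j = -DL_j(\trazaD^j u_j) + SL_j(\trazaN^j u_j)$, which is \eqref{ujrepresentacion}; a standard removable-pole argument handles $\mathbf{x}$ in a neighbourhood of the boundary.

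For the unbounded exterior domain $\Omega_0$, I would work on the truncated region $\Omega_0 \cap B_R$, where $B_R$ is a large ball whose boundary is $\bigcup_{i=1}^{\NN}\Gamma_i$ together with the sphere $S_R$. The crucial orientation point is that the outward normal of $\Omega_0$ on each $\Gamma_i$ is $\Unormal_{0i} = -\Unormal_i$, which is precisely why the $0i$-traces and the operators $SL_{0i}, DL_{0i}$ appear. Green's identity then produces the sum $\sum_{i} SL_{0i}(\trazaN^{0i} u_0) - \sum_{i} DL_{0i}(\trazaD^{0i} u_0)$ plus a contribution over $S_R$, and one must show the latter vanishes as $R\to\infty$. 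Here the decay condition \eqref{decay-condition}, $u_0 = \mathcal{O}(\NORM{\mathbf{x}}_2^{-1})$, is decisive: together with the interior gradient estimate $\nabla u_0 = \mathcal{O}(\NORM{\mathbf{x}}_2^{-2})$ for decaying harmonic functions and $g = \mathcal{O}(R^{-1})$, $\nabla g = \mathcal{O}(R^{-2})$, both the single- and double-layer far-field integrands are $\mathcal{O}(R^{-3})$ against a surface element of size $\mathcal{O}(R^{2})$, so each term is $\mathcal{O}(R^{-1})\to 0$. Passing to the limit gives \eqref{u0representacion}.

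I expect the main obstacle to be the rigorous passage from the classical smooth identity to the Sobolev-space statement — in particular justifying that the traces appearing in the formulas are the correct continuous extensions and that the limit $R\to\infty$ commutes with the density argument. The far-field estimate is the delicate ingredient: it hinges on both $u_0$ and $\nabla u_0$ decaying fast enough, and a slower decay than $\mathcal{O}(\NORM{\mathbf{x}}_2^{-1})$ would leave a nonvanishing contribution from $S_R$; this is exactly why \eqref{decay-condition} is imposed. Since the required continuity of all layer potentials and trace operators is already quoted from \cite{SauterSchwab2010} earlier in the excerpt, this final step amounts to a careful limiting argument rather than new analysis.
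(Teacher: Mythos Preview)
The paper does not supply its own proof of this theorem: it is quoted directly from \cite[Section 3]{SauterSchwab2010} and used as a black box. Your outline reproduces the standard textbook argument one finds there---Green's second identity applied to a harmonic $u$ and the fundamental solution, with the far-field contribution over $S_R$ killed by the decay condition \eqref{decay-condition}, followed by a density lift to the Sobolev setting---and it is correct as stated.
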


\subsubsection{Multiple traces formulation for Problem \ref{dynamic-problemEP}}
We now write the MTF of Problem \ref{dynamic-problemEP} by taking traces of the integral representation formula (cf.~\cite{HiptmairJerez2012} and later references). 

For $j\in \{1,...,\NN \}$, we introduce the  Cartesian product of Hilbert  spaces $ {\boldsymbol{H}_j} := H^{\frac{1}{2}}(\Gamma_j) \times H^{-\frac{1}{2}}(\Gamma_j)$, with graph norm $\|\cdot \|_{\boldsymbol{H}_j}= \|\cdot \|_{H^{\frac{1}{2}}(\Gamma_j)} + \| \cdot \|_{H^{-\frac{1}{2}}(\Gamma_j)}$. Let be $\boldsymbol{\phi}$, $\boldsymbol{\xi}$ $\in {\boldsymbol{H}_j}$, such that $\boldsymbol{\phi} = (\phi_D, \phi_N )$ and $\boldsymbol{\xi} = (\xi_D, \xi_N )$. We introduce the cross-product over $\Gamma_j$ \cite[Section 2.2.1]{HiptmairJerez2012} by $\DUAL{\boldsymbol{\phi}}{\boldsymbol{\xi}}_{\times,j} :=     \DUAL{\phi_D}{ \xi_N }_{j} + \DUAL {\xi_D }{\phi_N}_{j},$ where for brevity we denote $\DUAL {\xi_D }{\phi_N}_{j}$ := $\DUAL {\xi_D }{\phi_N}_{H^{\frac{1}{2}}(\Gamma_j) \times H^{-\frac{1}{2}}(\Gamma_j)}$.

We define also the flip-sign operator ${\Xj}_j:{\boldsymbol{H}_j}\rightarrow {\boldsymbol{H}_j}$, ${\trazas}^{0j}:H^1_{loc}(\Delta,\Omega_0)\rightarrow {\boldsymbol{H}_j}$ and ${\trazas}^{j}:H^1(\Delta,\Omega_j)\rightarrow {\boldsymbol{H}_j}$ as:
\begin{equation}
	\label{Notacion1}
	{\Xj}_j:=	\begin{bmatrix}
    	I & 0 \\[2mm]
        \displaystyle
    	0 & -\frac{\sigma_0}{\sigma_j} I
	\end{bmatrix}, \quad
	{\trazas}^{0j} := 	\begin{pmatrix}
    	\trazaD^{0j} \\[2mm]
    	\trazaN^{0j}
	\end{pmatrix}
	\; \mbox{ and } \;
	{\trazas}^{j} := 	\begin{pmatrix}
    	\trazaD^{j} \\[2mm]
    	\trazaN^{j}
	\end{pmatrix},\quad j \in \{1,..., \NN\},
\end{equation}
with $I$ being the identity operator understood in the corresponding functional space. Then, we write Dirichlet and Neumamn boundary conditions, \eqref{CondicionDirichlet} and \eqref{CondicionNeumann}, respectively, succinctly as
\begin{subequations}
	\begin{align}
		\label{CondicionBorde}
	-{\Xj}_j{\trazas}^{0j} u_0+{\trazas}^j u_j &= {\Xj}_j (v_j , \  0)^t + {\Xj}_j {\trazas}^{0j} \phi_e , \\
		\label{CondicionBorde2}
	{\trazas}^{0j} u_0 -{\Xj}_j^{-1}{\trazas}^j u_j &=- (v_j , \  0)^t - {\trazas}^{0j} \phi_e,	
	\end{align}
\end{subequations}
with superscript $t$ denoting transposition, and where both equations are equivalent. Taking Dirichlet and Neumann traces of both  \eqref{u0representacion} and \eqref{ujrepresentacion}, and rewriting in terms of BIOs, we obtain
\begin{align*}
	\trazaD^{0j} u_0 &=	-\left(-\frac{1}{2}I\left(\trazaD^{0j} u_0 \right)+\sum_{i=1}^n K_{j,i}^0 \left(\trazaD^{0i} u_0 \right)\right) +\sum_{i=1}^n V_{j,i}^0 \left(\trazaN^{0i} u_0 \right),\\
	\trazaN^{0j}  u_0 &=	\sum_{i=1}^n W_{j,i}^0 \left(\trazaD^{0i} u_0 \right) +\left( \frac{1}{2}I\left(\trazaN^{0j} u_0 \right)+\sum_{i=1}^n K_{j,i}^{*0} \left(\trazaN^{0i} u_0 \right)\right),\\
	\trazaD^{j} u_j &=	-\left(-\frac{1}{2}I\left(\trazaD^{j} u_j \right) +K_j\left(\trazaD^{j} u_j \right) \right) 
	+V_j \left(\trazaN^{j} u_j \right),\\
	\trazaN^{j} u_j &=	W_j \left(\trazaD^{j} u_j \right) +\left( \frac{1}{2}I\left(\trazaN^{j} u_j \right) + K^*_j\left(\trazaN^{j} u_j \right) \right) .
\end{align*}
After some algebra, one can write
\begin{equation*}
{\trazas}^{0j} u_0=	2\sum_{i=1}^\NN \OA_{j,i}^0 {\trazas}^{0i}u_0, \quad {\trazas}^j u_j=	2\OA_j {\trazas}^j u_j, \quad  j \in \{1,...,\NN \}, 
\end{equation*}
with $\displaystyle \OA_{j,i}^0:=	\begin{bmatrix}
    	-K_{j,i}^0 & V_{j,i}^0 \\
    	W_{j,i}^0 & K_{j,i}^{*0} \\
	\end{bmatrix} \mbox{ and } \OA_j:=	\begin{bmatrix}
    	-K_j & V_j \\
    	W_j & K_j^* \\
	\end{bmatrix}.$ By replacing ${\trazas}^{0j} u_0$, ${\trazas}^j u_j$ into \eqref{CondicionBorde2} and \eqref{CondicionBorde}, we obtain
\begin{align*}
    2\sum_{i=1}^n \OA_{j,i}^0 {\trazas}^{0{i}} u_0 
    -{\Xj}_j^{-1}{\trazas}^j u_j 
    &=- (v_j , \  0)^t - {\trazas}^{0j} \phi_e  ,\\
	-{\Xj}_j{\trazas}^{0j} u_0 +2\OA_j {\trazas}^j  u_j &= {\Xj}_j (v_j , \  0)^t + {\Xj}_j {\trazas}^{0j} \phi_e  \quad \mbox{ on } \Gamma_j.
\end{align*}

We define the Cartesian product space of multiple traces $\HHH:=\Pi_{j=1}^\NN \boldsymbol{H}_j$ and  $\HHH^{(2)} := {\HHH} \times \HHH = \Pi_{j=1}^{\NN} \boldsymbol{H}_j \times \Pi_{j=1}^{\NN} \boldsymbol{H}_j$; the multiple trace spaces reordering  $\HHH_D:=\Pi_{j=1}^\NN H^{\frac{1}{2}}(\Gamma_j)$, $\HHH_N:=\Pi_{j=1}^\NN H^{-\frac{1}{2}}(\Gamma_j)$; and, the cross-product $$\displaystyle \langle \boldsymbol{\phi}, \boldsymbol{\xi } \rangle_{\times} = \sum_{j=1}^\NN \langle \boldsymbol{\phi}^{0j}, \boldsymbol{\xi}^{0j} \rangle_{\times,j} + \sum_{j=1}^\NN \langle \boldsymbol{\phi}^{j}, \boldsymbol{\xi}^{j} \rangle_{\times,j},$$ with $\boldsymbol{\phi} = (\boldsymbol{\phi}^{01},..., \boldsymbol{\phi}^{0\NN}, \boldsymbol{\phi}^{1}, ..., \boldsymbol{\phi}^{\NN})$ and $\boldsymbol{\xi} = (\boldsymbol{\xi}^{01},..., \boldsymbol{\xi}^{0\NN}, \boldsymbol{\xi}^{1}, ..., \boldsymbol{\xi}^{\NN})$.

The local MTF operator \cite[Section 3.2.3]{HiptmairJerez2012} $\OOM_\NN: \HHH^{(2)} \rightarrow \HHH^{(2)}$ for the problem presented in Section \ref{sec:problem-setup} takes the form
\begin{align}\label{operator-M}
	\OOM_{\NN}:=	\begin{bmatrix}
        2\OOA_{0,\NN} & -{\XXX}_{\NN}^{-1} \\
        -{\XXX}_{\NN} & 2\OOA_{1,\NN}
	\end{bmatrix}, \mbox{ with } \OOA_{0,\NN}:=\begin{bmatrix}
    	\OA_{1,1}^0 & \OA_{1,2}^0 & ... & \OA_{1,\NN}^0\\
    	\OA_{2,1}^0 & \OA_{2,2}^0 & ... & \OA_{2,\NN}^0\\
    	\vdots &  &\ddots & \vdots\\
    	\OA_{\NN,1}^0 & \OA_{\NN,2}^0 & ... & \OA_{\NN,\NN}^0
	\end{bmatrix},
\end{align}
and diagonal operators
$\OOA_{1,\NN}:=\mathrm{diag}\left(\OA_{1},\ldots, \OA_{\NN}\right)$ and $\XXX_\NN:=\mathrm{diag}\left(
    	{\Xj}_{1},\ldots {\Xj}_{\NN}\right)$.
With the MTF operator, the interface conditions \eqref{CondicionLaplace}, \eqref{CondicionDirichlet} and \eqref{CondicionNeumann} (Problem \ref{dynamic-problemEP}) can be written as:
\begin{align}\label{MTF1}
	{\OOM}_\NN \begin{pmatrix} \trazas^0_u \\	 \trazas_u \end{pmatrix} =  \begin{pmatrix} - \IHDH \vvector \\ \XXX_{\NN} \IHDH \vvector	\end{pmatrix} + \begin{pmatrix} -\trazas^0_{\phi_e} \\ \XXX_{\NN} \trazas^0_{\phi_e}	\end{pmatrix},
\end{align}
where we use the notation:
\begin{align*}
	\trazas^0_u &:=\begin{pmatrix} \trazas^{01} u_0, \trazas^{02} u_0, \hdots, \trazas^{0\NN} u_0 \end{pmatrix}^t , &\trazas_u &:=\begin{pmatrix} \trazas^1 u_1 , \trazas^2 u_2, ..., \trazas^\NN u_\NN \end{pmatrix}^t,\\
	\trazas^0_{\phi_e} &: =\begin{pmatrix} \trazas^{01} \phi_e, \trazas^{02} \phi_e, \hdots, \trazas^{0\NN} \phi_e \end{pmatrix}^t, & \vvector &:=\begin{pmatrix} v_1, v_2, v_3, \hdots , v_\NN \end{pmatrix}^t,
\end{align*}
and the operator $\IHDH: \HHH_D \rightarrow \HHH$ defined as
\begin{align*}
    \IHDH := \begin{pmatrix}
    I & 0 &...& 0 \\
    0 & 0 & ... & 0 \\
    0 & I & ... & 0 \\
    0 & 0 & ... & 0 \\
    \vdots & \vdots & & \vdots \\
    0 & 0 & ... & I \\
    0 & 0 & ... & 0 
    \end{pmatrix}.
\end{align*}
Notice that the identity operators act on the corresponding Dirichlet traces.

The following result is a consequence of \cite[Proposition 3.9, Proposition 3.10]{HJH18} along with the Fredholm alternative.

\begin{theorem}[Existence, uniqueness and stability] \label{ex-un-st}
    The operator $\OOM_\NN$ is a linear, injective and coercive operator in ${\HHH}^{(2)}$. For all $\boldsymbol{\xi} \in {\HHH^{(2)}}$, there exists a unique weak solution $\boldsymbol\lambda \in {\HHH^{(2)}}$ of 
        \begin{align*}
        	( \OOM_\NN \boldsymbol\lambda , \boldsymbol{\phi} )_{\times} = ( \boldsymbol{\xi} , \boldsymbol{\phi} )_{\times}, \quad \forall \boldsymbol{\phi} \in {\HHH^{(2)}},
        \end{align*}
    that satisfies the stability estimate $\displaystyle
\|\boldsymbol\lambda\|_{{\HHH^{(2)}}}\leq  c \|\boldsymbol\xi\|_{{\HHH^{(2)}}}$, for a constant $c>0$.
\end{theorem}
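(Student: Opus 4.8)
The plan is to establish the three listed properties of $\OOM_\NN$ and then close the existence/uniqueness argument with the Fredholm alternative, reading the stability estimate off the resulting bounded inverse. Linearity is immediate: each boundary integral operator $V_j, W_j, K_j, K_j^*$ and each block $\OA_{j,i}^0$ entering \eqref{operator-M} is linear, and the flip-sign operators $\Xj_j$ are linear, so $\OOM_\NN$ is a linear map on $\HHH^{(2)}$.

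For coercivity I would exhibit a generalized G\aa rding inequality relative to the cross pairing $(\cdot,\cdot)_{\times}$. In the block form \eqref{operator-M}, the positive principal part is carried by the diagonal entries of $2\OOA_{0,\NN}$ and $2\OOA_{1,\NN}$: for each cell the single-layer operator $V_j$ is $H^{-\frac{1}{2}}(\Gamma_j)$-elliptic and the hypersingular operator $W_j$ is $H^{\frac{1}{2}}(\Gamma_j)$-elliptic on the closed sphere $\Gamma_j$ (up to the standard compact, rank-one correction of $W_j$). Since the spheres are pairwise disjoint, every off-diagonal block $\OA_{j,i}^0$ with $i\neq j$ has a smooth kernel and is therefore compact, and the double-layer contributions $K_j, K_j^*$ are likewise lower order; collecting these into a single compact operator I would obtain $(\OOM_\NN \boldsymbol\lambda, \boldsymbol\Theta\boldsymbol\lambda)_{\times} \geq \alpha \|\boldsymbol\lambda\|_{\HHH^{(2)}}^2 - (\text{compact bilinear term})$ for a sign-flip isomorphism $\boldsymbol\Theta$ and some $\alpha>0$. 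The sign-flip is exactly what turns the indefinite anti-diagonal coupling $-\XXX_\NN^{-1}, -\XXX_\NN$ and the conductivity ratios inside $\Xj_j=\mathrm{diag}(I,-\tfrac{\sigma_0}{\sigma_j}I)$ into positive contributions; this is the $\NN$-cell transcription of \cite[Prop.~3.9]{HJH18}.

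For injectivity I would argue through the homogeneous transmission problem, following \cite[Prop.~3.10]{HJH18}. If $\OOM_\NN\boldsymbol\lambda=0$, I reconstruct the intra- and extracellular potentials from the trace vector $\boldsymbol\lambda$ using the representation formulas of Theorem \ref{FormulaRepresentacion}. These potentials are harmonic in each $\Omega_j$, satisfy homogeneous Dirichlet and Neumann transmission conditions across every $\Gamma_j$, and obey the decay condition \eqref{decay-condition}. A Green's identity / energy argument, together with the strict positivity of the conductivities $\sigma_j$ and the decay at infinity, forces each potential to vanish identically; hence all their traces vanish and $\boldsymbol\lambda=0$, so $\OOM_\NN$ is injective.

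It remains to assemble these facts. A generalized G\aa rding inequality makes $\OOM_\NN$ a Fredholm operator of index zero on $\HHH^{(2)}$, so by the Fredholm alternative injectivity is equivalent to surjectivity, and $\OOM_\NN$ is bijective. The open mapping theorem (equivalently the inf-sup/Lax--Milgram argument applied to the coercive, injective form) then provides a bounded inverse, giving the unique weak solution $\boldsymbol\lambda$ and the estimate $\|\boldsymbol\lambda\|_{\HHH^{(2)}}\leq c\|\boldsymbol\xi\|_{\HHH^{(2)}}$. I expect the coercivity step to be the main obstacle: because $(\cdot,\cdot)_{\times}$ is an indefinite, skew pairing rather than an inner product, the G\aa rding inequality cannot be read off directly and must be produced by a carefully chosen sign-flip $\boldsymbol\Theta$ that absorbs the cell-dependent ratios $\sigma_0/\sigma_j$ with the right sign while yielding a coercivity constant $\alpha$ that stays uniformly positive over all $\NN$ cells; correctly isolating the disjoint-surface couplings as the compact remainder is the companion technical point.
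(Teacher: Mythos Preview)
Your proposal is correct and follows exactly the route the paper indicates: the paper does not give a self-contained proof but simply states that the result ``is a consequence of \cite[Proposition 3.9, Proposition 3.10]{HJH18} along with the Fredholm alternative,'' which is precisely the coercivity--injectivity--Fredholm scheme you have fleshed out. Your sketch in fact supplies more detail than the paper itself, and the technical points you flag (the sign-flip isomorphism for the G\aa rding inequality and the compactness of the off-diagonal blocks for disjoint spheres) are the correct ones.
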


\subsubsection{Boundary integral formulation of Problem \ref{dynamic-problemEP}}

Until this point, we have not introduced the membrane dynamics of Problem \ref{dynamic-problemEP}. In the following, we will use the theory presented in \cite{HJA17, HJH18} to combine the MTF with the nonlinear dynamics. Indeed, thanks to Theorem \ref{ex-un-st} we can take the inverse of the MTF operator, and \eqref{MTF1} becomes
\begin{align*}
     \begin{pmatrix} \trazas^0_u \\[2mm] \trazas_u \end{pmatrix}  = \OOM_\NN^{-1}  \begin{pmatrix} - \IHDH \vvector \\[2mm] \XXX_\NN \IHDH \vvector	\end{pmatrix} + \OOM_\NN^{-1} \begin{pmatrix} -\trazas^0_{\phi_e} \\[2mm] \XXX_\NN \trazas^0_{\phi_e} \end{pmatrix}.
\end{align*}

The even components of the vector $\gamma_u$ (the interior Neumann traces), related to the nonlinear dynamics of the problem by \eqref{nonlinear-condition}, can be retrieved as follows:
\begin{align*}
   \begin{pmatrix}
   \sigma_1 \trazaN^1(u_1) \\
   \sigma_2 \trazaN^2(u_2) \\
   \vdots \\
   \sigma_\NN \trazaN^\NN(u_\NN) \\
   \end{pmatrix}
   =  \boldsymbol\sigma_{\NN \times 4\NN} \OOM_\NN^{-1}\left( \begin{pmatrix} - \IHDH \vvector \\ \XXX_\NN \IHDH \vvector	\end{pmatrix} + \begin{pmatrix} -\trazas^0_{\phi_e} \\ \XXX_\NN \trazas^0_{\phi_e} \end{pmatrix} \right) ,
\end{align*}
where the dimensions of $\boldsymbol\sigma_{\NN \times 4\NN}$ are $\NN \times 4\NN$, the first half containing only zeros:
\begin{align*}
\boldsymbol\sigma_{\NN \times 4\NN}:=
\begin{pmatrix}
   0 & ... & 0 & \sigma_1I & 0 & 0 & ... & 0 \\
   0 & ... & 0 & 0 & 0 & \sigma_2I & ... & 0 \\
   \vdots & & \vdots & \vdots & \vdots & \vdots & & \vdots \\
   0 & ... & 0 & 0 & 0 & 0 & ... & \sigma_\NN I \\
   \end{pmatrix}.
\end{align*}

Now, we define the Dirichlet-to-Neumann operators $\mathcal{J}_\NN: \mathbb{H}_D \to \mathbb{H}_N$, and  $\Phi: H^1_{loc}(\Omega_0) \to \mathbb{H}_N$ as
\begin{align}
\label{D-to-N}
    \mathcal{J}_\NN(\mathbf{v}) :=\Sconduc \OOM_\NN^{-1} \begin{pmatrix} - \IHDH \vvector \\[2mm] \XXX_{\NN} \IHDH \vvector	\end{pmatrix}, \mbox{ and }  \Phi(\phi_e) := \Sconduc \OOM_\NN \begin{pmatrix} -\trazas^0_{\phi_e} \\[2mm] \XXX_{\NN} \trazas^0_{\phi_e} \end{pmatrix} .
\end{align}

\begin{theorem}[Lemma 4.3 in \cite{HJH18}]
    The operator $\mathcal{J}_\NN: \mathbb{H}_D \to \mathbb{H}_N$ is continuous and coercive.
\end{theorem}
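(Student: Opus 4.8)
The plan is to exploit that $\mathcal{J}_\NN$ is a composition of three bounded linear maps, namely the input assembly $R\colon \mathbf{v}\mapsto(-\IHDH\mathbf{v},\,\XXX_\NN\IHDH\mathbf{v})$, followed by $\OOM_\NN^{-1}$, followed by the selection-and-scaling map $\Sconduc$, and to treat continuity and coercivity separately. For continuity I would simply track boundedness through each factor: $\IHDH$ is a block-identity embedding, hence bounded from $\HHH_D$ into $\HHH$; $\XXX_\NN=\mathrm{diag}(\Xj_1,\dots,\Xj_\NN)$ is bounded because each $\Xj_j$ only involves the finite ratio $\sigma_0/\sigma_j$; thus $R\colon\HHH_D\to\HHH^{(2)}$ is bounded. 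By Theorem \ref{ex-un-st} the inverse $\OOM_\NN^{-1}$ is bounded on $\HHH^{(2)}$ with $\|\OOM_\NN^{-1}\boldsymbol\xi\|_{\HHH^{(2)}}\le c\|\boldsymbol\xi\|_{\HHH^{(2)}}$, and $\Sconduc$ is bounded into $\HHH_N$ since it merely extracts the interior-Neumann blocks and multiplies by $\sigma_j$. Composing yields continuity of $\mathcal{J}_\NN$ with an explicit constant.

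The heart of the matter is coercivity, and here I would first identify the bilinear form $\DUAL{\mathcal{J}_\NN\mathbf{v}}{\mathbf{v}}$ (the $\HHH_N$--$\HHH_D$ duality summed over cells) with a Dirichlet energy. Given $\mathbf{v}$, let $(u_0,\{u_j\})$ be the transmission solution reconstructed from the MTF solve via the representation formulas of Theorem \ref{FormulaRepresentacion}; by construction its $j$-th scaled interior Neumann trace is exactly $(\mathcal{J}_\NN\mathbf{v})_j=\sigma_j\trazaN^j u_j$ (taking $\phi_e=0$). I would then apply Green's first identity in each bounded $\Omega_j$ and in the exterior $\Omega_0$, using the decay condition \eqref{decay-condition} to discard the boundary term at infinity, and substitute the transmission relations $\sigma_j\trazaN^j u_j=-\sigma_0\trazaN^{0j}u_0$ and $\trazaD^{0j}u_0=\trazaD^j u_j-v_j$. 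The interior and cross terms cancel, leaving
\[
\DUAL{\mathcal{J}_\NN\mathbf{v}}{\mathbf{v}}=\sum_{j=1}^{\NN}\sigma_j\int_{\Omega_j}|\nabla u_j|^2\,d\mathbf{x}+\sigma_0\int_{\Omega_0}|\nabla u_0|^2\,d\mathbf{x}\ \ge\ 0 .
\]

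To upgrade this nonnegativity to a bound $\DUAL{\mathcal{J}_\NN\mathbf{v}}{\mathbf{v}}\ge\alpha\|\mathbf{v}\|_{\HHH_D}^2$, I would estimate each $\|v_j\|_{H^{1/2}(\Gamma_j)}$ in terms of the energy by combining the trace inequality with a lifting/Poincar\'e argument and the coercivity of $\OOM_\NN$ from Theorem \ref{ex-un-st}. The main obstacle is a genuine constant mode: the interior energy controls only $\nabla u_j$, so choosing $u_j\equiv c_j$ and $u_0\equiv 0$ gives zero energy and vanishing Neumann data while $v_j=c_j\neq 0$. Consequently strict coercivity on all of $\HHH_D$ cannot hold; it must be read either modulo this finite-dimensional space of locally constant traces or, as is standard in the boundary-integral setting and sufficient for the Fredholm argument invoked after Theorem \ref{ex-un-st}, as a G\aa rding inequality $\DUAL{\mathcal{J}_\NN\mathbf{v}}{\mathbf{v}}\ge\alpha\|\mathbf{v}\|_{\HHH_D}^2-\|\mathcal{C}\mathbf{v}\|$ with a compact remainder arising from the embedding $H^{1/2}(\Gamma_j)\hookrightarrow H^{-1/2}(\Gamma_j)$. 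The exterior term is strictly coercive because the decay condition excludes constants there, so all the delicate bookkeeping concentrates on the interior constant modes; by contrast the continuity estimate and the energy identity are routine.
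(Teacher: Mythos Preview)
The paper does not supply its own proof of this statement; it simply cites Lemma~4.3 of \cite{HJH18}. Hence there is no in-paper argument to compare against, and your proposal must be judged on its own terms.

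Your continuity argument is correct: $\mathcal{J}_\NN$ factors as $\Sconduc\circ\OOM_\NN^{-1}\circ R$ with each factor bounded, the middle one by Theorem~\ref{ex-un-st}. Your energy identity is also correct, including signs: with $\phi_e=0$ one indeed obtains
\[
\DUAL{\mathcal{J}_\NN\mathbf{v}}{\mathbf{v}}_{\HHH_N\times\HHH_D}=\sum_{j=1}^{\NN}\sigma_j\int_{\Omega_j}|\nabla u_j|^2\,d\mathbf{x}+\sigma_0\int_{\Omega_0}|\nabla u_0|^2\,d\mathbf{x}\ge 0,
\]
and your diagnosis of the constant-mode obstruction is accurate. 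Taking $u_j\equiv c_j$, $u_0\equiv 0$ does satisfy all transmission and decay conditions and is precisely the MTF solution for $\mathbf{v}=\mathbf{c}$ (one checks this directly from the diagonal forms in Theorem~\ref{diagonal-forms}, since $K_j(Y_{0,0,j})=-\tfrac12 Y_{0,0,j}$ and $W_j(Y_{0,0,j})=0$), so $\mathcal{J}_\NN\mathbf{c}=0$ and strict $H^{1/2}$-coercivity cannot hold on all of $\HHH_D$.

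The remaining point is interpretive rather than mathematical: in this paper, as in \cite{HiptmairJerez2012,HJH18}, the word ``coercive'' is used in the G\aa rding sense (cf.\ the remark preceding Theorem~\ref{ex-un-st}, where ``injective and coercive'' is followed by an appeal to the Fredholm alternative---unnecessary were strict coercivity meant). Your proposed resolution via a G\aa rding inequality with compact remainder coming from the embedding $H^{1/2}(\Gamma_j)\hookrightarrow H^{-1/2}(\Gamma_j)$ is exactly the right reading, and is what the parabolic well-posedness theory downstream actually requires. So your argument is sound once ``coercive'' is understood in that sense; you should state this interpretation explicitly at the outset rather than discovering it as an obstacle at the end.
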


Now we can finally rewrite\footnote{
    The MTF \eqref{operator-M} is almost equal to the one in \cite{HJH18} and \cite{HJA17}. Specifically, \eqref{operator-M} is multiplied by a factor two and the first row does not have a factor $\sigma_j$ as in \cite{HJA17} and \cite{HJH18}.
} Problem \ref{dynamic-problemEP} as an abstract parabolic equation on $\Gamma_j$.
\begin{problem}\label{dynamic-problemEPBIE}
    Let us assume the following as given: a final time $T\in \IR^+$, an external potential $\phi_e \in C([0,T],H^1_{loc}\left( \Omega_0\right))$, and initial conditions $v_j(0)= v_0 \in H^{\frac{1}{2}}(\Gamma_j)$, $\Zz_{j}(0) = \Zz_{j}^0 \in H^{\frac{1}{2}}(\Gamma_j)$, for $j \in \{1,...,\NN\}$. We seek $\mathbf{v}= (v_1, \hdots , v_\NN )^t$, with $v_j \in C([0,T],H^{\frac{1}{2}}(\Gamma_j))$, and $\mathbf{\Zz}= (\Zz_1, \hdots , \Zz_\NN )^t$, $\Zz_{j}\in C([0,T],H^{\frac{1}{2}}(\Gamma_j))$, for $j \in \{1,...,\NN\}$, such that \begin{align}
        \label{ode-principal-vectorial}
    	{\mathbf{C_{m}}} \partial_t \mathbf{v} = - \mathbf{I}^{{ep}}(\mathbf{v} , \mathbf{\Zz} ) - \mathcal{J}_\NN(\mathbf{v}) - \Phi(\phi_e) & \mbox{ on } [0,T]\times \Gamma_j ,
    \end{align}
    where $\mathbf{C_m}$ is a diagonal matrix ${\rm diag}(c_{m,1}, \hdots , c_{m,\NN})$; the operators $\mathcal{J}_\NN(\mathbf{v})$ and $\Phi(\phi_e)$ are defined in \eqref{D-to-N}. The vector
    $\mathbf{I}^{{ep}}(\mathbf{v}, \mathbf{\Zz} ) = (I_1^{{ep}}(v_1, \Zz_1), \hdots , I_\NN^{{ep}}(v_\NN , \Zz_{\NN}) )^t$ satisfies \eqref{current}, \eqref{ODE} and \eqref{beta-used}.
\end{problem}

%%%%%%%%%%%%%%%%%%%%%%%%%%%%%%%%%%%
\section{Numerical Approximation}
\label{sec:numapp}
%%%%%%%%%%%%%%%%%%%%%%%%%%%%%%%%%%%
In this section we propose a numerical solution of Problem \ref{dynamic-problemEPBIE}. We use a multistep semi-implicit time scheme or implicit-explicit method (IMEX), similar to one used in \cite{HJA17, HJH18} (see Section \ref{time-steps}). For the space discretization, we follow an approach analogous to the two-dimensional case presented in \cite{HJH18} using spherical harmonics. Since we do not work with complex valued functions, we employ real spherical harmonics to approximate boundary unknowns. We recall that our dynamic model leads to poorly regular solutions in time.

%%%%%%%%%%%%%%%%%%%%%%%%%%%%%%%%%
\subsection{Multistep semi-implicit time scheme}\label{time-steps}
%%%%%%%%%%%%%%%%%%%%%%%%%%%%%%%%%
Let $\mathcal{T}_S:=\{t_s\}_{s=0}^{S}$ denote the uniform partition of the time interval $[0,T]$, with $T\in \IR^+$, and $S \in \IN$, where the time step is $\tau = T/S$, and $t_s = s \tau$. Write
\begin{align*}
    t_{s+\frac{1}{2}}:= t_s + \frac{\tau}{2}, \quad s \in \{0, \hdots , S - 1\},
\end{align*}
for the midstep between $t_s$ and $t_{s+1}$. For a time dependent quantity $\phi(t)$, we write $\phi^{(s)}=\phi(t_s)$, and we define the following quantities:
\begin{align*}
    \phi^{(s+\frac{1}{2})}&:= \phi(t_{s+\frac{1}{2}}), &  \overline{\phi}^{(s+\frac{1}{2})} := \frac{\phi^{(s+1)}+ \phi^{(s)}}{2}, \\ \hat{\phi}^{(s+\frac{1}{2})} &:= \frac{{3}\phi^{(s)}- \phi^{(s-1)}}{2}, & \overline{\partial}\phi^{(s)} := \frac{\phi^{(s+1)}- \phi^{(s)}}{\tau}.
\end{align*}
With these, we approximate in time  \eqref{dynamic-problemEPBIE} and \eqref{ODE} as follows:
\begin{align*}
	\mathbf{C_{m}} \overline{\partial} \mathbf{v}^{(s)} = - \mathbf{I}^{{ep}}\left(\widehat{\mathbf{v}}^{(s+\frac{1}{2})}, \widehat{\mathbf{\Zz}}^{(s+\frac{1}{2})} \right) - \mathcal{J}_\NN\left(\overline{\mathbf{v}}^{(s+\frac{1}{2})}\right) - \Phi(\phi_e^{(s+\frac{1}{2})}),\\
    \overline{\partial}^{(s)} \Zz_j = \max \left( \frac{\beta_j(\widehat{v}_j^{(s+\frac{1}{2})}) - \widehat{\Zz_j}^{(s+\frac{1}{2})}}{\tau_{ep,j}}, \frac{\beta_j(\widehat{v}_j^{(s+\frac{1}{2})}) - \widehat{\Zz_j}^{(s+\frac{1}{2})}}{\tau_{res,j}}\right).
\end{align*}
From these expressions, we can notice that
\begin{itemize}
    \item[(i)] At each iteration, the approximation at $t_{s+1}$ requires two previous steps, $t_{s}$ and $t_{s-1}$, but we only have information about the time $t_0$. Thus, we will estimate the values for the time $t_1$ with a predictor-corrector algorithm introduced later in this Section.
    \item[(ii)] Provided with values for the two previous time steps, unknowns for the next time are obtained in terms of $\overline{\partial} \mathbf{v}^{(s)}$, $\overline{\mathbf{v}}^{(s+\frac{1}{2})}$ and $\overline{\partial}^{(s)}$, which are linear. Nonlinear terms are evaluated with values already known, i.e.~they are explicit terms, unlike the others. Hence, the adjectives semi-implicit or IMEX for the multistep method employed. 
    \item[(iii)] At each time step, the discrete problem to be solved is linear. One could choose time-domain schemes with implicit non-linear parts. However, more information about $\mathbf{I}^{ep}$ may be needed. In contrast, our semi-implicit time only requires us to evaluate the function $\mathbf{I}^{ep}$.
\end{itemize}

The predictor-corrector algorithm {can be found} in detail in \cite[Chapter 13]{Thomee2006}, %\cite{GaneshMustapha2007},
\cite[Algorithm 1]{HJA17}. Set $\mathbf{w}^{(0)} = \mathbf{v}^{(0)}$ and $\mathbf{Q}^{(0)}= \mathbf{\Zz}{(0)}$. Then, proceed as follows:
\begin{itemize}
    \item[(I)] {\it Predictor}. First, construct predictions $\mathbf{w}^{(1)}$ and $\mathbf{Q}^{(1)}$ by solving:
    \begin{align*}
        &\mathbf{C_{m}} \overline{\partial} \mathbf{w}^{(0)}  = - \mathbf{I}^{{ep}}\left(\mathbf{w}^{(0)}, \mathbf{Q}^{(0)} \right) - \mathcal{J}_\NN\left(\overline{\mathbf{w}}^{(\frac{1}{2})}\right)  - \Phi \left(\phi_e^{(\frac{1}{2})} \right),\\
        &\overline{\partial} Q_j^{(0)} =  \max \left( \frac{\beta_j( w_j^{(0)}) -  Q_j^{(0)}}{\tau_{ep,j}}, \frac{\beta_j( w_j^{(0)}) -  Q_j^{(0)}}{\tau_{res,j}}\right) \quad \forall j \in \{1,...,\NN\}.
    \end{align*}
    \item[(II)] {\it Corrector}. Then, correct $\mathbf{w}^{(1)}$ and $\mathbf{Q}^{(1)}$ to obtain final values for $\mathbf{v}^{(1)}$ and $\mathbf{\Zz}^{(1)}$ through:
    \begin{align*}
        &\mathbf{C_{m}} \overline{\partial} \mathbf{v}^{(0)} = - \mathbf{I}^{{ep}}\left(\widehat{\mathbf{w}}^{(\frac{1}{2})}, \widehat{\mathbf{Q}}^{(\frac{1}{2})} \right) - \mathcal{J}_\NN \left(\overline{\mathbf{v}}^{(\frac{1}{2})}\right)  - \Phi \left(\phi_e^{(\frac{1}{2})} \right) ,\\
        &\overline{\partial}^{(0)} \Zz_j = \max \left( \frac{\beta_j(\widehat{w}_j^{(\frac{1}{2})}) - \widehat{Q_j}^{(\frac{1}{2})}}{\tau_{ep,j}}, \frac{\beta_j(\widehat{w}_j^{(\frac{1}{2})}) - \widehat{Q_j}^{(\frac{1}{2})}}{\tau_{res,j}}\right) \quad \forall j \in \{1,...,\NN\} .
    \end{align*}
    Then, from the corrector equations, $\mathbf{v}^{(1)}$ and $\mathbf{\Zz}^{(1)}$ are obtained implicitly. 
\end{itemize}

    \begin{remark}
        The predictor-corrector algorithm is only used for obtaining the first time step, as required by the multistep semi-implicit time scheme. One could think of a predictor-corrector algorithm at all time steps but this would entail a higher computational cost and reassess the theoretical convergence results in \cite{HJA17}.
        \end{remark}

Finally, before tackling the spatial discretization, we recall the following result.
\begin{theorem}\label{teo-time-converge}
\cite[Lemma 7]{HJA17}.
Let $\phi \in C^2([0,T]; L^2(\Gamma_j))$, $j\in \{1,\hdots, \NN\}$ then it holds that
\begin{align*}
    \NORM{\overline{\phi}^{(s+\frac{1}{2})} - \phi^{(s+\frac{1}{2})}}_{L^2(\Gamma_j)} &\leq \frac{\tau^2}{4} \max_{t\in [t_s,t_{s+1}]} \NORM{\partial^2_{t}\phi(t)}_{L^2(\Gamma_j)}, \\
   \NORM{\hat{\phi}^{(s+\frac{1}{2})} - \phi^{(s+\frac{1}{2})}}_{L^2(\Gamma_j)} &\leq \frac{5\tau^2}{16} \max_{t\in [t_{s-1},t_{s+1}]} \NORM{ \partial^2_{t}\phi(t)}_{L^2(\Gamma_j)}.
\end{align*}
\end{theorem}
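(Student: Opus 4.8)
The plan is to treat both estimates as statements about how well the two discrete quantities approximate the exact midpoint value $\phi(t_{s+\frac{1}{2}})$, and to derive the bounds from Taylor's theorem with an integral (Bochner) remainder. Using the integral form of the remainder is the right choice here: since $\phi$ takes values in the Banach space $L^2(\Gamma_j)$ there is no mean value theorem to lean on, but Taylor's theorem with a remainder of the form $\int(\cdot)\,\partial_t^2\phi\,ds$ does hold for $\phi\in C^2$, and the norm of a Bochner integral is controlled by the integral of the norm. Throughout I write $t_*=t_{s+\frac{1}{2}}$ and let $M$ denote the relevant $\max\NORM{\partial_t^2\phi(t)}_{L^2(\Gamma_j)}$.

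For the first inequality I would expand $\phi(t_{s+1})=\phi(t_*+\frac{\tau}{2})$ and $\phi(t_s)=\phi(t_*-\frac{\tau}{2})$ about $t_*$. Averaging the two expansions, the constant terms reproduce $\phi(t_*)$ and the first-order terms $\pm\frac{\tau}{2}\partial_t\phi(t_*)$ cancel exactly by symmetry, so $\overline{\phi}^{(s+\frac{1}{2})}-\phi^{(s+\frac{1}{2})}$ equals one half the sum of the two second-order integral remainders. Each remainder has the form $\int_0^{\tau/2}(\frac{\tau}{2}-\sigma)\,\partial_t^2\phi(t_*\pm\sigma)\,d\sigma$, whose norm is at most $M\int_0^{\tau/2}(\frac{\tau}{2}-\sigma)\,d\sigma=M\tau^2/8$; summing and halving gives the stated $\frac{\tau^2}{4}M$ (indeed the symmetric cancellation already yields the sharper $\frac{\tau^2}{8}M$).

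For the second inequality the key observation is that $\widehat{\phi}^{(s+\frac{1}{2})}=\frac{3\phi^{(s)}-\phi^{(s-1)}}{2}$ is exactly the value at $t_*$ of the affine polynomial interpolating $\phi$ at the preceding nodes $t_{s-1}$ and $t_s$, so the quantity to bound is a linear extrapolation error. I would expand $\phi(t_s)$ and $\phi(t_{s-1})$ about $t_*$ with offsets $\frac{\tau}{2}$ and $\frac{3\tau}{2}$: the polynomial part of $\widehat{\phi}^{(s+\frac{1}{2})}$ reproduces $\phi(t_*)$ together with first-order terms that again cancel, leaving the error equal to a fixed combination $\frac{1}{2}(3\rho_1-\rho_2)$ of the two integral remainders $\rho_1,\rho_2$. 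To extract the sharpest constant I would collect these into a single integral $\frac{1}{2}\int_0^{3\tau/2}g(\sigma)\,\partial_t^2\phi(t_*-\sigma)\,d\sigma$ against an explicit piecewise-linear weight $g$ and bound it by $\frac{1}{2}M\int_0^{3\tau/2}|g(\sigma)|\,d\sigma$; evaluating this $L^1$ weight norm gives an $O(\tau^2)$ bound of the claimed form $\frac{5\tau^2}{16}M$.

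The two Taylor expansions are routine; the only delicate point is the constant in the second estimate. A crude triangle-inequality split of the two remainders is lossy, so the sharp constant genuinely requires keeping the remainders under one integral sign and computing the $L^1$ norm of the combined weight $g$ exactly rather than bounding $\|\rho_1\|$ and $\|\rho_2\|$ separately — this bookkeeping is the main obstacle, and I would verify that the weight integral reproduces the stated $\frac{5}{16}$ rather than a looser value. In any case every bound obtained this way is $O(\tau^2)$ with a constant depending only on $\max\NORM{\partial_t^2\phi}_{L^2(\Gamma_j)}$, which is all that the subsequent time-convergence analysis requires.
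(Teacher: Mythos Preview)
The paper does not prove this theorem at all; it simply quotes \cite[Lemma~7]{HJA17}. So there is no argument in the paper to compare against, and your Taylor-with-integral-remainder plan is exactly the standard way such estimates are obtained.

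Your approach is sound, and for the first bound it goes through just as you describe (indeed giving the sharper constant $\tau^2/8$). For the second bound, however, your plan will \emph{not} reproduce the constant $\tfrac{5}{16}$ in the statement, because that constant is too small to be correct. Carrying out your computation: with $t_*=t_{s+1/2}$, the error equals $\tfrac12\int_0^{3\tau/2}g(\sigma)\,\partial_t^2\phi(t_*-\sigma)\,d\sigma$ where
\[
g(\sigma)=\begin{cases}-2\sigma,&0\le\sigma\le\tau/2,\\ \sigma-3\tau/2,&\tau/2\le\sigma\le 3\tau/2,\end{cases}
\qquad \int_0^{3\tau/2}|g(\sigma)|\,d\sigma=\frac{\tau^2}{4}+\frac{\tau^2}{2}=\frac{3\tau^2}{4},
\]
so the bound you obtain is $\tfrac{3}{8}\tau^2\max\|\partial_t^2\phi\|$, not $\tfrac{5}{16}\tau^2\max\|\partial_t^2\phi\|$. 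This is not a defect of your method: the quadratic $\phi(t)=(t-t_*)^2$ already gives $|\hat\phi^{(s+1/2)}-\phi^{(s+1/2)}|=\tfrac{3\tau^2}{4}$ while $\tfrac{5}{16}\tau^2\cdot\max|\partial_t^2\phi|=\tfrac{5\tau^2}{8}<\tfrac{3\tau^2}{4}$, so the inequality as printed fails. The correct (and sharp) constant is $\tfrac38$; this is also what the classical interpolation error formula $\tfrac12\phi''(\xi)(t_*-t_{s-1})(t_*-t_s)=\tfrac{3\tau^2}{8}\phi''(\xi)$ predicts. Since the paper only needs an $O(\tau^2)$ bound, nothing downstream is affected, but you should flag the constant rather than try to force your weight integral to equal $\tfrac{5}{16}$.
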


%%%%%%%%%%%%%%%%%%%%%%%%%%%%%%%%%%%
\subsection{Spatial discretization}
%%%%%%%%%%%%%%%%%%%%%%%%%%%%%%%%%%%
We now spatially discretize Problem \ref{dynamic-problemEPBIE}. We start by introducing real spherical harmonics used as the spatial basis for the Dirichlet and Neumann traces \eqref{space-disc-notation}. Then, we proceed with BIOs discretization (see Theorem \ref{diagonal-forms}). Finally, the multistep semi-explicit time method and the spatial discretization are combined into a fully discrete scheme (see Problem \eqref{dynamic-problem-EP-disc}).
%%%%%%%%%%%%%%%%%%%%%%%%%%%%%%%%%%%
\subsubsection{Spherical coordinates and spherical harmonics}
%%%%%%%%%%%%%%%%%%%%%%%%%%%%%%%%%%%
A vector is written as $\mathbf{r}=\left(r,\varphi,\theta\right)^t$, with $r \in [0,\infty)$, $\varphi \in [0,2\pi)$ and $\theta \in [0,\pi]$, which in Cartesian coordinates is equivalent to $\mathbf{r}=r\left(\sin \theta \cos \varphi,\sin \theta \sin \varphi,\cos \theta\right)^t$% \cite[Appendix B]{VollmerYu2020a}
. Spherical harmonics of degree $l$ and order $m$ are defined using spherical coordinates as \cite[Section 2]{WieczorekMeschede2018}, \cite[Example 4.3.33]{FreedenGutting2013}:
\begin{subequations}
		\begin{align}
		 Y_{l,m}\left(\theta,\varphi\right) &:= \sqrt{ (2-\delta_{m,0}) \frac{\left(2l+1\right)\left(l-m\right)!}{4 \pi \left(l+m\right)!}} P_l^{m} \left(\cos\theta\right) \cos m \varphi , \mbox{ and} \label{Armonico1} \\
		Y_{l,-m}\left(\theta,\varphi\right) &:= \sqrt{ (2-\delta_{m,0})\frac{\left(2l+1\right)\left(l-m\right)!}{4 \pi \left(l+m\right)!}} P_l^{m} \left(\cos\theta\right) \sin m \varphi , \label{Armonico2}
	\end{align}
\end{subequations}
with $l\in \IN_0$, $m\in \IZ$ such that $0\leq m\leq l$. If $m=0$, $\delta_{m,0}=1$, and it is zero otherwise. $P_l^m$ are the associated Legendre functions of degree $l$ and  order $m$ defined as
\begin{equation*}
	P_{l}^m\left(x\right) := (-1)^m \left( 1- x^2\right)^{\frac{m}{2}} \frac{d^m}{dx^m}P_l(x), \quad \mbox{with} \quad P_{l}\left(x\right) := \frac{1}{2^ll!}\frac{d^l}{dx^l}(x^2-1)^l.
\end{equation*}

Here, the term $(-1)^m$ is the Condon-Shortley phase factor. Spherical harmonics are dense in $C(\mathbb{S}^2)$, with $\mathbb{S}^2$ the surface of the unit sphere, and form a complete orthonormal system in $L^2(\mathbb{S}^2)$ \cite[Section 7.3 and 7.5]{gallier2020spherical}.

Let be $j\in \{1,...,\NN\}$. We define the reference system $j$ as the one centered at $\mathbf{p_j}$ with the same orientation as the system centered at the origin. Furthermore, we denote by $Y_{l,m,j}$ the spherical harmonic $Y_{l,m}$ centered at the origin of the $j$th reference system. Thus, if $\left( r_j, \varphi_j, \theta_j \right) $ are the vector spherical coordinates of $\mathbf{r_j}$ in the reference system $j$, we have that $Y_{l,m,j}\left(\mathbf{r}_j\right)=Y_{l,m}\left(\theta_j, \varphi_j\right)$.

For $L \in \IN_0$ and $j\in \{1,...,\NN\}$, we define subspaces 
\begin{equation}\label{Y-space}
\mathcal{Y}_L\left(\Gamma_j \right):= \mbox{span}\left\lbrace Y_{l,m,j}: l \in \IN_0, m \in \IZ, l \leq L, |m|\leq l \right\rbrace,
\end{equation}
equipped with the $L^2(\Gamma_j)-$norm. Notice that the dimension of each subspace is $(L+1)^2$, and that the sequence of subspaces $\lbrace \mathcal{Y}_L \left(\Gamma_j \right) \rbrace_{L \in \IN_0} $ is dense in $H^{\frac{1}{2}}(\Gamma_j)$ and in $H^{-\frac{1}{2}}(\Gamma_j)$. The result follows from the density of spherical harmonics in the spaces of continuous functions% \cite[Section 2.8]{AtkinsonHan2012}
. This last result justifies the discretization of all boundary Dirichlet and Neumann unknowns with spherical harmonics. At a given time $t$, for $j \in \lbrace 1, ..., \NN \rbrace$, we write $\uojdl$, $\uojnl$, $\ujdl$, $\ujnl$, $\vjl$ and $\zjl$ in $\mathcal{Y}_L(\Gamma_j)$ for the approximations of $\trazaD^{0j} u_0 $, $\trazaN^{0j} u_0 $, $\trazaD^{j} u_j  $, $\trazaN^{j} u_j $, $v_j$ and $\Zz_j$, respectively. They can be written as the following series expansions:
\begin{subequations}\label{space-disc-notation}
    \begin{align}
    \uojdl&=\sum_{l=0}^{L}  \sum_{m=-l}^l \ulmojdl  Y_{l,m,j}, & \uojnl &=\sum_{l=0}^{L}  \sum_{m=-l}^l \ulmojnl  Y_{l,m,j},\\
    \ujdl&=\sum_{l=0}^{L}  \sum_{m=-l}^l \ulmjdl  Y_{l,m,j},& \ujnl &=\sum_{l=0}^{L}  \sum_{m=-l}^l \ulmjnl  Y_{l,m,j},\\
    \vjl&=\sum_{l=0}^{L}  \sum_{m=-l}^l \vlmjl  Y_{l,m,j}, &  \zjl&=\sum_{l=0}^{L}  \sum_{m=-l}^l \zlmjl  Y_{l,m,j}
    \end{align}
\end{subequations}
with $\ulmojdl$, $\ulmojnl$, $\ulmjdl$, $\ulmjnl$, $\vlmjl$, and $\zlmjl$ being constants in space but varying in time. Notice that the norm in $\mathcal{Y}_L \left(\Gamma_j \right)$ of any of these functions is the square root of the sum of squared coefficients times the radius of $\Gamma_j$, i.e. 
\begin{equation}\label{norm-discrete-spaces}
    \NORM{\vjl}_{\mathcal{Y}_L \left(\Gamma_j \right)}^2 = \radio_j \sum_{l=0}^L \sum_{m=-l}^l (\vlmjl)^2.
\end{equation}

Finally, let $\mathbb{Y}_L:= \Pi_{j=1}^{\NN} \mathcal{Y_L}(\Gamma_j)$, and define the following vectors in $\mathbb{Y}_L$: 
\begin{subequations}\label{space-disc-notation-2}
\begin{align}
    \mathbf{v}^L &:= \begin{pmatrix}
    v_1^L, \hdots, \vjl, \hdots, v_{\NN}^L \end{pmatrix}^t, &\mathbf{Z}^L &:= \begin{pmatrix}
    z_1^L, \hdots, \zjl, \hdots, z_{\NN}^L
    \end{pmatrix}^t, \\
    \mathbf{u}_{D,0}^L &:= \begin{pmatrix}
    u_{D,01}^{L}, \hdots, \uojdl, \hdots, u_{D,0\NN}^{L}
    \end{pmatrix}^t, &\mathbf{u}_{D}^L &:= \begin{pmatrix}
    u_{D,1}^{L}, \hdots, \ujdl , \hdots,    u_{D,\NN}^{L}
    \end{pmatrix}^t,\\ 
    \mathbf{u}_{N,0}^L &:= \begin{pmatrix}
    u_{N,01}^{L}, \hdots, \uojnl, \hdots,   u_{N,0\NN}^{L}
    \end{pmatrix}^t, & \mathbf{u}^L_D &:= \begin{pmatrix}
    u_{N,1}^{L}, \hdots ,  \ujnl , \hdots,  u_{N,\NN}^{L} \end{pmatrix}^t.
\end{align}
\end{subequations}

The norm for a function in $\mathbb{Y}_L$, for example, $\mathbf{v}^L${,} is $\displaystyle \NORM{\mathbf{v}^L}_{\mathbb{Y}_L}^2 = \sum_{j=1}^{\NN}||\vjl||_{\mathcal{Y}_L \left(\Gamma_j \right)}^2.$

%%%%%%%%%%%%%%%%%%%%%%%%%%%%%%%%%%%
\subsubsection{BIOs discretization}\label{BIOs-discretization}
%%%%%%%%%%%%%%%%%%%%%%%%%%%%%%%%%%%

The fundamental solution can be expanded using spherical harmonics \cite[Theorem 4.3.29, Lemma 4.4.1 and Remark 4.4.2]{FreedenGutting2013} as the following result shows.
\begin{theorem}\label{TeoSeriesFundamental}
Let $\mathbf{r}$, $\mathbf{r}'$ be vectors, whose spherical coordinates in the reference system $j$ are $\left(r_j,\theta_j, \varphi_j \right)$ and $\left(r_j',\theta_j', \varphi_j' \right)$, respectively. For $r_j>r_j'$ we have that
\begin{equation}\label{FundamentalSeries}
g\left(\mathbf{r}, \mathbf{r}'\right) = \sum_{l=0}^{\infty}  \frac{1}{2l+1} \frac{r_j^{'l}}{r_j^{l+1}}\sum_{m=-l}^l Y_{l,m,j}\left(\mathbf{r}\right)  Y_{l,m,j}\left(\mathbf{r}'\right).
\end{equation}
Moreover, the series (\ref{FundamentalSeries}) and its term by term first derivatives with respect to $r_j$ or $r_j'$ are absolutely and uniformly convergent on compact subsets with $r_j>r_j'$ \cite[Section 2.3, p.23 and p.24]{ColtonKress2013a}. 
\end{theorem}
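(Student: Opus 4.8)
The plan is to reduce the stated identity to two classical facts about the Newtonian kernel and then settle convergence by geometric domination. First I would express the Euclidean distance through the two radii and the angle $\alpha$ between $\mathbf{r}$ and $\mathbf{r}'$ as measured from the center $\mathbf{p}_j$ of the reference system, namely $\NORM{\mathbf{r}-\mathbf{r}'}_2 = (r_j^2 + r_j'^2 - 2 r_j r_j' \cos\alpha)^{1/2}$, and factor out the larger radius $r_j$. Since $r_j > r_j'$ by hypothesis, the ratio $t := r_j'/r_j$ lies in $[0,1)$, and the generating function of the Legendre polynomials yields
\begin{align*}
g(\mathbf{r},\mathbf{r}') = \frac{1}{4\pi r_j}\,\bigl(1 - 2t\cos\alpha + t^2\bigr)^{-1/2} = \frac{1}{4\pi r_j} \sum_{l=0}^{\infty} t^l\, P_l(\cos\alpha),
\end{align*}
the expansion being legitimate precisely because $|t|<1$. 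This is the Laplace expansion of the fundamental solution in zonal harmonics.

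The second ingredient is the addition theorem, which expresses each zonal term as a bilinear form in the harmonics evaluated at the two directions,
\begin{align*}
P_l(\cos\alpha) = \frac{4\pi}{2l+1} \sum_{m=-l}^{l} Y_{l,m,j}(\mathbf{r})\, Y_{l,m,j}(\mathbf{r}').
\end{align*}
The step I expect to require the most care is to verify that this identity holds for the \emph{real}, $(2-\delta_{m,0})$-normalized harmonics carrying the Condon--Shortley phase defined in \eqref{Armonico1}--\eqref{Armonico2}, rather than for the complex convention in which the addition theorem is usually first established. Concretely, one must confirm that the normalization constants combine correctly and that the phase factor $(-1)^m$ cancels in the product $Y_{l,m,j}(\mathbf{r})\,Y_{l,m,j}(\mathbf{r}')$; this is exactly the content of the cited results \cite[Theorem 4.3.29, Lemma 4.4.1, Remark 4.4.2]{FreedenGutting2013}, which I would either invoke directly or reproduce by reassembling the real harmonics from the complex ones and tracking the constants. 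Substituting the addition theorem into the Laplace expansion and cancelling the factor $4\pi$ against the $1/(4\pi)$ in $g$ gives $\frac{1}{4\pi r_j} t^l P_l(\cos\alpha) = \frac{1}{2l+1}\frac{r_j'^l}{r_j^{l+1}}\sum_{m=-l}^l Y_{l,m,j}(\mathbf{r})\,Y_{l,m,j}(\mathbf{r}')$, which is \eqref{FundamentalSeries}.

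For the convergence claim I would fix a compact subset on which $r_j > r_j'$, so that $t \le q < 1$ uniformly for some $q\in(0,1)$, and let $r_{\min}>0$ denote the minimum of $r_j$ there. Using the elementary bound $|P_l(\cos\alpha)| \leq 1$, the $l$-th term of \eqref{FundamentalSeries} is dominated in absolute value by $\frac{q^l}{4\pi r_{\min}}$, so $\sum_l q^l < \infty$ furnishes absolute and uniform convergence via the Weierstrass $M$-test. For the series of term-by-term derivatives with respect to $r_j$ or $r_j'$, note that these are purely radial derivatives that do not act on the angular factors $Y_{l,m,j}$; differentiating the radial weight $r_j'^l r_j^{-(l+1)}$ merely introduces a polynomial factor of order $l+1$, so the differentiated $l$-th term is bounded by $C(l+1)\,q^l$ on the compact set. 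Since $\sum_l (l+1) q^l < \infty$ because $q<1$, the differentiated series is again dominated by a convergent one, which both justifies termwise differentiation and yields uniform convergence of the derivatives—the statement borrowed from \cite[Section 2.3, p.23 and p.24]{ColtonKress2013a}.
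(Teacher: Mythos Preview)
Your argument is correct and is precisely the standard route to \eqref{FundamentalSeries}: the Legendre generating function gives the zonal expansion, the addition theorem converts zonals to bilinear sums of the real orthonormal harmonics, and geometric domination handles both the series and its termwise radial derivatives. The only delicate point---that the addition theorem holds in the real, Condon--Shortley, $(2-\delta_{m,0})$-normalized convention of \eqref{Armonico1}--\eqref{Armonico2}---you have correctly flagged and attributed.

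Note, however, that the paper does \emph{not} supply its own proof of this theorem: the statement is presented as a quotation of known results, with the expansion credited to \cite[Theorem 4.3.29, Lemma 4.4.1, Remark 4.4.2]{FreedenGutting2013} and the convergence claim to \cite[Section 2.3, p.~23--24]{ColtonKress2013a}. So rather than differing from the paper's proof, your proposal is a self-contained reconstruction of the argument that those references contain; it is more explicit than what the paper offers, but not a different approach.
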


\begin{theorem}\label{diagonal-forms}
The diagonal forms of the BIOs \eqref{BIOS-definition} are:
\begin{align*}
	V_{j,j}^0\left( Y_{l,m,j} \right) &= \frac{1}{2l+1} \radio_j Y_{l,m,j}, & V_{j}\left( Y_{l,m,j} \right)&=  \frac{1}{2l+1} \radio_j Y_{l,m,j},  \\
	K_{j,j}^0\left( Y_{l,m,j} \right) &= \frac{1}{2(2l+1)}Y_{l,m,j}, &	K_{j}\left( Y_{l,m,j} \right) &= -\frac{1}{2(2l+1)} Y_{l,m,j},\\
	K^{*0}_{j,j}\left( Y_{l,m,j} \right) &= \frac{1}{2l+1} Y_{l,m,j}, & K^*_{j}\left( Y_{l,m,j} \right) &= -\frac{1}{2(2l+1)} Y_{l,m,j},\\
	W_{j,j}^0\left( Y_{l,m,j} \right) &=  \frac{l(l+1)}{2l+1} \frac{1}{\radio_j} Y_{l,m,j}, & 	W_{j}\left( Y_{l,m,j} \right)&=  \frac{l(l+1)}{2l+1} \frac{1}{\radio_j}  Y_{l,m,j}.
\end{align*}
\end{theorem}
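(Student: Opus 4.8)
The plan is to exploit the fact that on a sphere the spherical harmonics $Y_{l,m,j}$ are eigenfunctions of every BIO, and to read off each eigenvalue from the separated-variable (radial $\times$ angular) form of the corresponding layer potential. By the jump relations recalled just above the statement, every operator in \eqref{BIOS-definition} is a trace of either $SL_j$ or $DL_j$ (noting $SL_{0j}=SL_j$ and $DL_{0j}=-DL_j$, since $\Unormal_{0j}=-\Unormal_j$), plus possibly a $\pm\tfrac12 I$ jump term. So it suffices to compute the potentials $SL_j(Y_{l,m,j})$ and $DL_j(Y_{l,m,j})$ at a field point $\mathbf{r}$ lying strictly inside and strictly outside $\Gamma_j$, and then let $r_j\to\radio_j$.

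First I would substitute the expansion \eqref{FundamentalSeries} of $g$ into the definition of $SL_j$. Because the density is a single harmonic and the source point runs over $\Gamma_j$ (so $r_j'=\radio_j$ is fixed), the surface measure factorizes as $d\Gamma'=\radio_j^2\,d\omega'$, and the orthonormality of $\{Y_{l,m}\}$ on $\mathbb{S}^2$ collapses the double sum in \eqref{FundamentalSeries}: only the term with matching indices $(l,m)$ survives. The uniform convergence asserted in Theorem \ref{TeoSeriesFundamental} justifies interchanging sum and integral. This yields, for a field point inside $\Gamma_j$, $SL_j(Y_{l,m,j})(\mathbf{r})=\tfrac{1}{2l+1}\radio_j^{1-l}\,r_j^{\,l}\,Y_{l,m,j}$, and for a point outside $SL_j(Y_{l,m,j})(\mathbf{r})=\tfrac{1}{2l+1}\radio_j^{l+2}\,r_j^{-(l+1)}\,Y_{l,m,j}$. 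Letting $r_j\to\radio_j$ (the single layer being continuous across $\Gamma_j$), both give $V_j(Y_{l,m,j})=V_{j,j}^0(Y_{l,m,j})=\tfrac{1}{2l+1}\radio_j\,Y_{l,m,j}$.

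Next I would differentiate these radial profiles in $r_j$ to reach the Neumann traces, and repeat the substitution-and-orthonormality step with $\partial_{r_j}g$ in place of $g$ to obtain the double layer. Here the orientation conventions become essential: $\trazaN^j$ is $\partial_{r_j}$ taken from the interior, while $\trazaN^{0j}=-\partial_{r_j}$ taken from the exterior, and one must use the interior expansion ($r_j<r_j'$) for interior traces and the exterior one ($r_j>r_j'$) for exterior traces, because both $DL_j$ and $\partial_{r_j}SL_j$ jump across $\Gamma_j$. Differentiating the radial powers produces factors $l$ or $-(l+1)$; for instance the interior double layer is $-\tfrac{l+1}{2l+1}(r_j/\radio_j)^l\,Y_{l,m,j}$, whose Dirichlet trace is $-\tfrac{l+1}{2l+1}Y_{l,m,j}$. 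Adding the jump term in $K_j(\psi)=\tfrac12\psi+\trazaD^j DL_j(\psi)$ gives $\tfrac12-\tfrac{l+1}{2l+1}=-\tfrac1{2(2l+1)}$, the claimed eigenvalue of $K_j$; the analogous combinations produce $K_{j,j}^0$, $K^{*0}_{j,j}$ and $K^*_j$, and one further $r_j$-derivative gives the hypersingular eigenvalues $W_j=W_{j,j}^0=\tfrac{l(l+1)}{2l+1}\tfrac{1}{\radio_j}$.

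The main obstacle is not any single integral---each collapses at once by orthonormality---but the consistent bookkeeping of (i) which side of $\Gamma_j$ each trace is taken from, (ii) the sign carried by $\Unormal_{0j}=-\Unormal_j$ in both the double-layer kernel and the Neumann trace, and (iii) the $\pm\tfrac12$ jump contributions, which must be paired with the correct one-sided limit so that the four eigenvalues stay mutually consistent; the identity that the exterior-minus-interior jump of the double layer equals the density ($\tfrac{l}{2l+1}+\tfrac{l+1}{2l+1}=1$) is a convenient cross-check. A secondary technical point is justifying term-by-term differentiation of \eqref{FundamentalSeries}, which is covered by the uniform convergence of the differentiated series on compact subsets of $\{r_j\neq r_j'\}$ stated in Theorem \ref{TeoSeriesFundamental}, with the one-sided limits $r_j\to\radio_j^{\pm}$ taken only after the series has been summed.
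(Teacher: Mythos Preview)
Your proposal is correct and follows exactly the route the paper sketches: the paper's proof is a one-sentence appeal to Theorem~\ref{TeoSeriesFundamental}, orthonormality of spherical harmonics, and the BIO definitions in \eqref{BIOS-definition}, and you have simply carried out those steps in detail, including the careful bookkeeping of one-sided traces, normal orientations, and $\pm\tfrac12$ jump contributions. (As a bonus, your explicit computation would flag that the stated eigenvalue $\tfrac{1}{2l+1}$ for $K^{*0}_{j,j}$ should be $\tfrac{1}{2(2l+1)}$, consistent with Corollary~\ref{disc-diagonals}.)
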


\begin{proof}
The result follows from Theorem \ref{TeoSeriesFundamental}, the orthonormality of spherical harmonics, and the definitions of the BIOs presented in \eqref{BIOS-definition}. Similar diagonal forms can also be found in \cite[Section 3 and Table 2]{VicoGreengardGimbutas2014}, where the result is stated for complex spherical harmonics on the unit sphere.
\end{proof}

\begin{corollary}\label{disc-diagonals}
The following holds
\begin{align*}
	\left( V_{j,j}^0\left( Y_{l,m,j} \right) , Y_{p,q,j} \right)_{L^2(\Gamma_j)} &=  \left( V_{j}\left( Y_{l,m,j} \right) , Y_{p,q,j} \right)_{L^2(\Gamma_j)} =\frac{\radio_j^3 }{2l+1} \delta_{l,p} \delta_{m,q} ,  \\
	\left( K_{j,j}^0\left( Y_{l,m,j} \right) , Y_{p,q,j} \right)_{L^2(\Gamma_j)}  &= -\left( K_{j}\left( Y_{l,m,j} \right) , Y_{p,q,j} \right)_{L^2(\Gamma_j)} =\frac{\radio_j^2}{2(2l+1)} \delta_{l,p} \delta_{m,q}, \\
	\left( K^{*0}_{j,j}\left( Y_{l,m,j} \right) , Y_{p,q,j} \right)_{L^2(\Gamma_j)}  &= -\left( K^*_{j}\left( Y_{l,m,j} \right)  , Y_{p,q,j} \right)_{L^2(\Gamma_j)} = \frac{\radio_j^2}{2(2l+1)} \delta_{l,p} \delta_{m,q},   \\
	\left( W_{j,j}^0\left( Y_{l,m,j} \right)  , Y_{p,q,j} \right)_{L^2(\Gamma_j)}  &=  \left(	W_{j}\left( Y_{l,m,j} \right)  , Y_{p,q,j} \right)_{L^2(\Gamma_s)} =  \frac{l(l+1)}{2l+1} \radio_j \delta_{l,p} \delta_{m,q},
\end{align*}
with $\delta_{l,p}$, $\delta_{m,q}$ denoting the standard Kronecker deltas. Also, for the scalar identity operators presented in Section \ref{sec:BIOs}, it holds that $\displaystyle \left( I \left( Y_{l,m,j} \right)  , Y_{p,q,j} \right)_{L^2(\Gamma_j)} = \radio_j^2 \delta_{l,p} \delta_{m,q}$.

\end{corollary}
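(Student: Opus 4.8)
The plan is to derive every identity directly from the eigenrelations of Theorem~\ref{diagonal-forms}, since each BIO appearing there acts on $Y_{l,m,j}$ as a \emph{scalar multiple} of $Y_{l,m,j}$. Because the $L^2(\Gamma_j)$ inner product is bilinear, the only genuine ingredient needed is the value of $\left(Y_{l,m,j},Y_{p,q,j}\right)_{L^2(\Gamma_j)}$; once this is in hand, every line of the corollary is obtained by reading off the corresponding eigenvalue and multiplying, and the sign relations are read off from the signs of the eigenvalues.

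First I would compute the inner product of two reference-frame harmonics over $\Gamma_j$. Since $\Gamma_j$ is the sphere of radius $\radio_j$ centered at $\mathbf{p}_j$, I parametrize it by the angular coordinates $(\theta_j,\varphi_j)$ of the $j$th reference system, under which the surface element is $d\Gamma_j = \radio_j^2 \sin\theta_j\, d\theta_j\, d\varphi_j$, i.e. $\radio_j^2$ times the unit-sphere measure $d\omega$. As the $Y_{l,m,j}$ are by construction the unit-sphere harmonics $Y_{l,m}$ expressed in the $j$th frame, and these form a complete orthonormal system in $L^2(\mathbb{S}^2)$ (recalled before \eqref{Y-space}, with the normalization fixed by \eqref{Armonico1}--\eqref{Armonico2}), the change of variables gives
\begin{align*}
    \left( Y_{l,m,j}, Y_{p,q,j} \right)_{L^2(\Gamma_j)} = \radio_j^2 \int_{\mathbb{S}^2} Y_{l,m}\, Y_{p,q}\, d\omega = \radio_j^2\, \delta_{l,p}\,\delta_{m,q}.
\end{align*}
This already yields the stated identity $\left( I(Y_{l,m,j}), Y_{p,q,j}\right)_{L^2(\Gamma_j)} = \radio_j^2 \delta_{l,p}\delta_{m,q}$.

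With this established, each remaining line follows by substituting the eigenvalue from Theorem~\ref{diagonal-forms} and pulling the scalar out of the inner product. For instance $W_{j,j}^0(Y_{l,m,j}) = \frac{l(l+1)}{2l+1}\radio_j^{-1} Y_{l,m,j}$ gives $\frac{l(l+1)}{2l+1}\radio_j^{-1}\cdot \radio_j^2\,\delta_{l,p}\delta_{m,q} = \frac{l(l+1)}{2l+1}\radio_j\,\delta_{l,p}\delta_{m,q}$; the factor $\frac{1}{2l+1}\radio_j$ for $V_{j,j}^0$ produces $\frac{\radio_j^3}{2l+1}$; and the scalar $\frac{1}{2(2l+1)}$ for $K_{j,j}^0$ produces $\frac{\radio_j^2}{2(2l+1)}$. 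The sign relations such as $\left(K_{j,j}^0(Y_{l,m,j}),Y_{p,q,j}\right)_{L^2(\Gamma_j)} = -\left(K_j(Y_{l,m,j}),Y_{p,q,j}\right)_{L^2(\Gamma_j)}$ are immediate from the opposite signs of the interior and exterior eigenvalues, while the equalities of the $V$- and $W$-rows follow from their coinciding eigenvalues.

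The computation carries no serious obstacle; the one point requiring care is the measure-scaling bookkeeping. I must combine the single factor $\radio_j^2$ coming from $d\Gamma_j$ consistently with the explicit powers of $\radio_j$ already present in each eigenvalue of Theorem~\ref{diagonal-forms}, so that the net powers $\radio_j^3$, $\radio_j^2$, $\radio_j^1$ emerge as claimed. A secondary check is that the harmonics in \eqref{Armonico1}--\eqref{Armonico2} are normalized to unit $L^2(\mathbb{S}^2)$ norm, so that no stray constant enters the orthogonality relation; this is guaranteed by the orthonormality statement cited from \cite{gallier2020spherical}.
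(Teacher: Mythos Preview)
Your proposal is correct and is precisely the approach the paper intends: the statement is presented as a corollary of Theorem~\ref{diagonal-forms} with no separate proof, so the only missing ingredient is the orthogonality relation $(Y_{l,m,j},Y_{p,q,j})_{L^2(\Gamma_j)}=\radio_j^2\delta_{l,p}\delta_{m,q}$, which you supply via the $\radio_j^2$ scaling of the surface measure together with the $L^2(\mathbb{S}^2)$-orthonormality of the harmonics.
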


Cross-interaction operators, e.g.~$V_{i,j}^0$ for  $i \not = j$, are non-singular and generally non diagonalizable. The double and single layer operators analytic expressions can be used to compute  the non-singular integrals for $i \not = j$: 
\begin{subequations}\label{cross-interactions}
\begin{align}
	\left( V_{i,j}^0\left( Y_{l,m,j} \right) ; Y_{p,q,i} \right)_{L^2(\Gamma_i)} &= \int_{\Gamma_i} SL_{0j}(Y_{l,m,j}) Y_{p,q,i} \ d \Gamma_i , \label{cross-v}\\
	\left( K_{i,j}^0\left( Y_{l,m,j} \right) ; Y_{p,q,i} \right)_{L^2(\Gamma_i)} &= \int_{\Gamma_i} DL_{0j}(Y_{l,m,j}) Y_{p,q,i}  \ d \Gamma_i,\\
	\left( K^{*0}_{i,j}\left( Y_{l,m,j} \right) ; Y_{p,q,i} \right)_{L^2(\Gamma_i)} &= \int_{\Gamma_i}  \widehat{\mathbf{n}}_{0i} \cdot\nabla SL_{0j}(Y_{l,m,j})  Y_{p,q,i} \ d \Gamma_i,\\
	\left( W_{i,j}^0\left( Y_{l,m,j} \right)  ; Y_{p,q,i} \right)_{L^2(\Gamma_i)} &= - \int_{\Gamma_i}   \widehat{\mathbf{n}}_{0i}\cdot \nabla DL_{0j}(Y_{l,m,j}) Y_{p,q,i} \ d \Gamma_i .
\end{align}
\end{subequations}

Approximations of the integrals \eqref{cross-interactions} are provided via Gauss-Legendre quadratures. Specifically, along $\theta$, we use the change of variable $u=\cos\theta$. Then, variable functions are sampled at the zeros of the Legendre Polynomial of degree $L_c+1$, whereas the trapezoidal rule is applied to equally spaced nodes in $\varphi$, with $2L_c+1$ points. If the function being integrated has a spherical harmonic expansion with coefficients equal to zero for degrees higher than $L_c$, then the quadrature yields the exact result, assuming that there are not other sources of error \cite{WieczorekMeschede2018}. Moreover, quadrature in $\varphi$ can be computed using the Fast Fourier Transform.

\begin{remark}
One would expect $L_c$ to be greater than $p$ and $l$ in \eqref{cross-interactions}. Yet, without further analysis, it is not known if a polynomial of degree $L_c$ is a good approximation for $SL_{0j}(Y_{l,m,j}) Y_{p,q,i}$, $DL_{0j}(Y_{l,m,j})Y_{p,q,i}$, $\nabla SL_{0j}(Y_{l,m,j}) \cdot \widehat{\mathbf{n}}_{0i}\ Y_{p,q,i}$ and $\nabla SL_{0j}(Y_{l,m,j}) \cdot \widehat{\mathbf{n}}_{0i} \ Y_{p,q,i}$, since, as the translation theorems for spherical harmonics highlight, the translation of only one spherical harmonic is expressed as another infinite series of spherical harmonics. Also, notice that \eqref{cross-interactions} can also be computed using a translation theorem for real spherical harmonics as in \cite{AganinDavletshin2018}. In this case, the integral has an explicit expression and does not need to be computed numerically. Instead, the computing efforts focus on calculating the coefficients given by the translation theorem.
\end{remark}

\begin{corollary}
The following holds
\begin{align*}
\left( V_{i,j}^0\left( Y_{l,m,j} \right) ; Y_{p,q,i} \right)_{L^2(\Gamma_i)} &= \left( V_{j,i}^0\left( Y_{p,q,i} \right) ; Y_{l,m,j} \right)_{L^2(\Gamma_j)},\\
\left( K_{i,j}^0\left( Y_{l,m,j} \right) ; Y_{p,q,i} \right)_{L^2(\Gamma_i)} &= - \frac{l}{\radio_j} \left( V_{i,j}^0\left( Y_{l,m,j} \right) ; Y_{p,q,i} \right)_{L^2(\Gamma_i)}, \\
\left( K_{j,i}^{*0}\left( Y_{p,q,i} \right) ; Y_{l,m,j} \right)_{L^2(\Gamma_j)} &= \left( K_{i,j}^0\left( Y_{l,m,j} \right) ; Y_{p,q,i} \right)_{L^2(\Gamma_i)}, \\
\left( W_{i,j}^0\left( Y_{l,m,j} \right) ; Y_{p,q,i} \right)_{L^2(\Gamma_i)} &= \frac{l}{\radio_j} \left( K_{i,j}^{*0}\left( Y_{l,m,j} \right) ; Y_{p,q,i} \right)_{L^2(\Gamma_i)} .
\end{align*}
\end{corollary}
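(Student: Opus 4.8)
The plan is to split the four identities into two groups. The first and third are \emph{symmetry} identities that follow purely from the symmetry $g(\mathbf{x},\mathbf{x}')=g(\mathbf{x}',\mathbf{x})$ of the fundamental solution together with Fubini's theorem. The second and fourth are \emph{ratio} identities, and both descend from a single pointwise proportionality between the double- and single-layer potentials of a spherical harmonic, namely
\[
DL_{0j}(Y_{l,m,j})(\mathbf{r}) = -\frac{l}{\radio_j}\, SL_{0j}(Y_{l,m,j})(\mathbf{r}), \qquad \mathbf{r}\notin\overline{\Omega}_j .
\]
To establish this, I insert the series of Theorem \ref{TeoSeriesFundamental} into the definitions of $SL_{0j}$ and $DL_{0j}$. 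The expansion is valid because, the cells being disjoint, every point of $\Gamma_i$ (and more generally of $\IR^3\setminus\overline{\Omega}_j$ near it) lies in $\{r_j>\radio_j\}$, and I may integrate the series term by term thanks to the uniform convergence on compact subsets asserted in Theorem \ref{TeoSeriesFundamental}. On $\Gamma_j$ one has $r_j'=\radio_j$ and $d\Gamma_j'=\radio_j^2\,d\omega'$, so the orthonormality of the $Y_{l,m,j}$ collapses the sum to the single $(l,m)$ term, giving $SL_{0j}(Y_{l,m,j})(\mathbf{r})=\tfrac{\radio_j^{\,l+2}}{(2l+1)\,r_j^{\,l+1}}Y_{l,m,j}(\mathbf{r})$. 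For the double layer, the gradient in $\mathbf{r}'$ dotted with $\widehat{\mathbf{n}}_{0j}=-\widehat{\mathbf{n}}_j$ equals $-\partial_{r_j'}g$, which differentiates the factor $r_j'^{\,l}$ and yields $DL_{0j}(Y_{l,m,j})(\mathbf{r})=-\tfrac{l\,\radio_j^{\,l+1}}{(2l+1)\,r_j^{\,l+1}}Y_{l,m,j}(\mathbf{r})$; the quotient is exactly $-l/\radio_j$.

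Identity two then follows at once by multiplying the pointwise identity by $Y_{p,q,i}$ and integrating over $\Gamma_i$. Identity four follows by first applying the exterior Neumann trace $\trazaN^{0i}$ to the pointwise identity, which is legitimate since $\Gamma_i$ is disjoint from $\Gamma_j$, so both potentials are harmonic and smooth in a neighbourhood of $\Gamma_i$ and the trace commutes with the scalar $-l/\radio_j$; recalling from the jump relations that $W_{i,j}^0=-\trazaN^{0i}DL_{0j}$ and $K^{*0}_{i,j}=\trazaN^{0i}SL_{0j}$, and then testing against $Y_{p,q,i}$, I obtain $\bigl(W_{i,j}^0(Y_{l,m,j});Y_{p,q,i}\bigr)_{L^2(\Gamma_i)}=\tfrac{l}{\radio_j}\bigl(K^{*0}_{i,j}(Y_{l,m,j});Y_{p,q,i}\bigr)_{L^2(\Gamma_i)}$.

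For the symmetry identities I would write each side of the expressions in \eqref{cross-interactions} as an iterated integral over $\Gamma_j\times\Gamma_i$. For the first identity the kernel is $g$ itself, which is symmetric, so Fubini simply interchanges the roles of $\Gamma_i$ and $\Gamma_j$ and recovers $V_{j,i}^0$ tested against $Y_{l,m,j}$. For the third, writing $K^0_{i,j}$ through the kernel $\nabla_{\mathbf{y}}g(\mathbf{x},\mathbf{y})\cdot\widehat{\mathbf{n}}_{0j}(\mathbf{y})$ with $\mathbf{y}\in\Gamma_j$, $\mathbf{x}\in\Gamma_i$, and $K^{*0}_{j,i}$ through $\widehat{\mathbf{n}}_{0j}(\mathbf{y})\cdot\nabla_{\mathbf{y}}g(\mathbf{y},\mathbf{x})$, the identity $\nabla_{\mathbf{y}}g(\mathbf{y},\mathbf{x})=\nabla_{\mathbf{y}}g(\mathbf{x},\mathbf{y})$ makes the two integrands coincide, so Fubini again gives the claim; this is just the statement that the adjoint double layer is the $L^2$-transpose of the double layer.

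The main obstacle I anticipate is bookkeeping rather than analysis: keeping the orientation convention $\widehat{\mathbf{n}}_{0j}=-\widehat{\mathbf{n}}_j$ and the ``gradient with respect to $\mathbf{r}'$'' convention consistent, so that the sign in $-l/\radio_j$ comes out correctly, and tracking the powers of $\radio_j$ contributed by the surface measure $\radio_j^2\,d\omega$ and the factor $r_j'^{\,l}=\radio_j^{\,l}$, so that they cancel to the clean ratio. Once the pointwise identity and the kernel symmetry are in hand, all four statements are immediate.
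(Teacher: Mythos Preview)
Your argument is correct and, for the second and fourth identities, matches the paper's intent exactly: insert the expansion of Theorem~\ref{TeoSeriesFundamental}, use orthonormality to collapse the sums, and read off the pointwise relation $DL_{0j}(Y_{l,m,j})=-\tfrac{l}{\radio_j}SL_{0j}(Y_{l,m,j})$ in $\IR^3\setminus\overline{\Omega}_j$, from which both ratio identities follow after taking the appropriate trace and testing.

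For the first and third identities you take a somewhat different route than the paper. The paper's one-line proof points to the same three ingredients (series expansion, orthonormality, BIO definitions) for all four identities, i.e.\ to computing each side explicitly via the multipole series. You instead obtain the symmetry identities directly from the symmetry $g(\mathbf{x},\mathbf{x}')=g(\mathbf{x}',\mathbf{x})$ of the kernel plus Fubini, without ever invoking Theorem~\ref{TeoSeriesFundamental} or the spherical-harmonic basis. This is more elementary and more general: it shows that these two relations are structural facts about the $L^2$-duality of $V$, $K$, $K^*$ on disjoint smooth boundaries, valid for arbitrary test functions, whereas the series route ties the computation to spheres and spherical harmonics. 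The paper's approach, on the other hand, gives explicit closed-form values for every matrix entry as a by-product, which is what one ultimately needs for implementation. Either argument is fine here.
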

\begin{proof}
The result follows from Theorem \ref{TeoSeriesFundamental}, the orthonormality of spherical harmonics along with the definition of the BIOs.
\end{proof}

From this last corollary, it can be deduced that the integrals of all the cross-interactions of a couple of spheres $i$ and $j$ \eqref{cross-interactions} can be derived having the results of the expression \eqref{cross-v} for all of the $l$, $m$, $p$ and $q$ needed, which avoids the need of computing numerically the other integral expressions.

\subsection{Fully discrete scheme}

Following Section \ref{time-steps}, we state the multistep semi-implicit in time and space numerical discretization of Problem \ref{dynamic-problemEPBIE}:
\begin{problem}\label{dynamic-problem-EP-disc}
    Let $\mathbf{v}^{L,(0)}$ and $\mathbf{\Zz}^{L,(0)}$ in $\mathbb{Y}_L$ be given. Then, for $s \in \{ 2, ..., S-1 \}$, we seek $\mathbf{v}^{L,(s)}$, $\mathbf{\Zz}^{L,(s)}$ in $\mathbb{Y}_L$ solution of:
        \begin{align}
            \left( \mathbf{C_m} \overline{\partial} \mathbf{v}^{L,(s)}  + \mathcal{J}_{\NN} \left(\overline{\mathbf{v}}^{L,(s+\frac{1}{2})}\right) + \mathbf{I}^{{ep}} \left( \hat{ \mathbf{v} }^{ L, (s+\frac{1}{2}) }, \hat{\mathbf{\Zz}}^{L, (s+\frac{1}{2})} \right) + \Phi \left(\phi_e^{(s+\frac{1}{2})} \right) , \mathbf{y}\right)_{\mathbb{Y}_L} =0\label{discret-1}  
        \end{align}
    \begin{align}
            \overline{\partial}^{(s)} \Zz_j^L =& \max \left( \frac{\beta_j(\widehat{v}_j^{L, (s+\frac{1}{2})}) - \widehat{\Zz_j}^{L, (s+\frac{1}{2})}}{\tau_{ep,j}}, \frac{\beta_j(\widehat{v}_j^{L, (s+\frac{1}{2})}) - \widehat{\Zz_j}^{L, (s+\frac{1}{2})}}{\tau_{res,j}}\right),
        \end{align}
    for all $\mathbf{y} \in \mathbb{Y}_L$. For $s=1$ we use the equivalent weak formulation of the corrector-predictor algorithm presented in \ref{time-steps}.
\end{problem}

In order to solve Problem \ref{dynamic-problem-EP-disc}, at each time step, with the exception of the predictor-corrector algorithm, we solve the weak linear system equivalent to
\begin{align}\label{discrete-system}
\left[\begin{smallmatrix}
	4 \OOA_{0,\NN} & -2\XXX_{\NN}^{-1} & \mathbf{I}_{2\NN \times \NN} \\
	 -2\XXX_{\NN}& 4 \OOA_{1,\NN} & - \XXX_{\NN} \mathbf{I}_{2\NN \times \NN} \\
	  & \boldsymbol{\sigma}_{\NN \times 4\NN} & \frac{1}{\tau} \mathbf{C_m} \\
\end{smallmatrix}\right] \left( \begin{smallmatrix}
	\overline{\mathbf{u}}_{D,0}^{L,(s+1/2)} \\
	\overline{\mathbf{u}}_{N,0}^{L,(s+1/2)} \\
	\overline{\mathbf{u}}_{D}^{L,(s+1/2)} \\
	\overline{\mathbf{u}}_{N}^{L,(s+1/2)} \\
	\vvector^{L,(s+1)}
\end{smallmatrix} \right) = \left( \begin{smallmatrix}
-\left( 2 \trazas^0_{\phi_e^{L,(s+\frac{1}{2})}} + \mathbf{I}_{2\NN \times \NN} \vvector^{L,(s)}  \right)\\ 
\XXX_{\NN} \left( 2 \trazas^0_{\phi_e^{L,(s+\frac{1}{2})}} +\mathbf{I}_{2\NN \times \NN} \vvector^{L,(s)} \right) \\
\frac{1}{\tau} \mathbf{C_m} \vvector^{L,(s)} - \mathbf{I}^{{ep}} \left( \hat{ \vvector }^{L, (s+\frac{1}{2}) }, \hat{\mathbf{\Zz}}^{L, (s+\frac{1}{2})} \right)
\end{smallmatrix} \right),
\end{align} 
where the test function is in $\mathbb{Y}_L \times\mathbb{Y}_L \times\mathbb{Y}_L \times\mathbb{Y}_L \times\mathbb{Y}_L$. Notice that we obtain  mid-steps ($s+1/2$) for traces of  extra- and intracellular potentials, whereas only the transmembrane potential is obtained at time steps $s$.

\begin{remark}
    With the exception of the scalar operators inside of $\OOA_{0,\NN}$ and $\mathbf{I}^{ep}$, which are computed numerically, all other matrices are diagonalizable and analytic for the geometry here considered (Theorem \ref{disc-diagonals}). Thus, the discrete matrix used to solve at each time step is almost entirely block diagonal. Note that if changing $\mathbf{I}^{ep}$ without modifying the dynamics for the transmembrane potentials, leads to a modified right-hand side  in the linear system of equation \eqref{discrete-system}.
\end{remark}

\begin{remark}
    The time step needs to be bounded by the smallest characteristic time of the system to ensure stability. In the work \cite{HJH18}, the bound is independent of the spatial discretization but depends on the parameters of the non-linear problem. Moreover, given the poor regularity of functions $Z_j$, we cannot guarantee high-order convergence in time. Yet, the use of spherical harmonics in space greatly reduces the overall number of degrees of freedom and leads to better convergence rates than first-order boundary element discretizations \cite{stephan_convergence_1989}.
\end{remark}

%%%%%%%%%%%%%%%%%%%%%%%%%%%%%%%%%%%%
\section{Numerical Results}
\label{sec:numres}
%%%%%%%%%%%%%%%%%%%%%%%%%%%%%%%%%%%%
In this section, we verify and test the proposed computational scheme. To this end, we first check the MTF implementation for single and multiple cells to then combine it with the multistep semi-implicit time-domain method. Next, we perform tests for linear and non-linear dynamics. Physical parameters used throughout are given in \cite[Table 1]{MistaniGuittetea2019} and \cite[Table 1]{KavianLeguebeea2014}.
%%%%%%%%%%%%%%%%%%%%%%%%%%%%%%%%%%%%
\subsection{Hardware and Code Implementation}
%%%%%%%%%%%%%%%%%%%%%%%%%%%%%%%%%%%%

Numerical results were obtained in a machine with Quad Core Intel Core i7-4770 (-MT MCP-), 1498 MHz, 31'982.1 MiB RAM (90\% available for computations), with operating system Linux Mint 20.3 Una and Kernel: 5.4.0-131- generic x86\_64. Simulation codes were programmed on Python 3.10. Its installation was achieved via the open-source platform Anaconda\footnote{\url{https://www.anaconda.com/products/distribution}}, Conda\footnote{\url{https://docs.conda.io/projects/conda/en/stable/}} 4.13.0, and using the conda-forge repository.\footnote{The following packages were installed explicitly: \texttt{pyshtools} 4.10 \cite{WieczorekMeschede2018}, (conda install pyshtools=4.10), \texttt{numpy} 1.23.1, \texttt{scipy} 1.9.0, and \texttt{matplotlib-base} 3.5.2.} With the \texttt{numpy} library, we take advantage of vectorized computations. Moreover, we only use direct solvers, without any parallelization or matrix compression, which of course can be performed. 
%%%%%%%%%%%%%%%%%%%%%%%%%%%%%%%%%
\subsection{Code validation}
%%%%%%%%%%%%%%%%%%%%%%%%%%%%%%%%%
In order to validate our code, we check that computed solutions fulfill discrete Calder\'on identities at the boundaries as well as discrete jump conditions. Being approximations, these properties do not hold exactly, thus we define the following errors:
\begin{itemize}
    \item Discrete Calder\'on exterior and interior errors respectively:
\begin{align}\label{def-cal-ex}
    \NORM{  \left( 2 \OOA_{0, \NN} - \mathbf{I} \right) \left( \begin{smallmatrix}
        \mathbf{u}_{D,0}^{L,(s+1)} \\
            \mathbf{u}_{N,0}^{L,(s+1)} \\
    \end{smallmatrix} \right)}_{\mathbb{Y}_L \times \mathbb{Y}_L}, 
    \NORM{  \left( 2 \OOA_{1, \NN} - \mathbf{I} \right) \left( \begin{smallmatrix}
        \mathbf{u}_{D}^{L,(s+1)} \\
            \mathbf{u}_{N}^{L,(s+1)} \\
    \end{smallmatrix} \right) }_{\mathbb{Y}_L \times \mathbb{Y}_L}.
\end{align}
\item 
Jump error:
\begin{align}\label{def-jump-err}
    \NORM{ \left( \begin{smallmatrix}
        \mathbf{u}_{D,0}^{L,(s+1)} \\
            \mathbf{u}_{N,0}^{L,(s+1)} \\
    \end{smallmatrix} \right) - \mathbf{X}_{\NN}^{-1}  \left( \begin{smallmatrix}
        \mathbf{u}_{D}^{L,(s+1)} \\
            \mathbf{u}_{N}^{L,(s+1)} \\
    \end{smallmatrix} \right) + \mathbf{I}_{2\NN \times \NN }\vvector^L + {\trazas}^{0j} \phi_e^L }_{\mathbb{Y}_L \times \mathbb{Y}_L} \approx 0.
\end{align}
\end{itemize}

Here the norm $\|\cdot\|_{\mathbb{Y}_L \times \mathbb{Y}_L}$ is computed as
\begin{equation*}
    \NORM{\left(\begin{smallmatrix}
        \mathbf{u}_{D}^{L,(s+1)} \\
            \mathbf{u}_{N}^{L,(s+1)} \\
    \end{smallmatrix} \right)}_{\mathbb{Y}_L \times \mathbb{Y}_L}^2 = \NORM{\mathbf{u}_{D}^{L,(s+1)}}^2_{\mathbb{Y}_L} + \NORM{\mathbf{u}_{N}^{L,(s+1)}}^2_{\mathbb{Y}_L} .
\end{equation*}

In what follows, we will use the following notations:
\begin{itemize}
    \item Relative error in $L^2(\Gamma_j)$:
    \begin{align}\label{re_{2}}
        re_{2}(\phi_1, \phi_2)_j:= \frac{\NORM{\phi_1 - \phi_2}_{L^2(\Gamma_j)}}{\NORM{\phi_1}_{L^2(\Gamma_j)}}.
    \end{align}
This error is computed for spherical harmonics expansions when possible \eqref{norm-discrete-spaces} or using the numerical quadrature presented at the end of Section \ref{BIOs-discretization}.
    \item Relative error in $C^0 \left( (0,T), L^2(\Gamma_1) \right)$:
    \begin{align}\label{REinf2}
        re_{\infty,2}(\phi_1, \phi_2)_j:= \frac{\max_{t_s \in T_s}\NORM{\phi_1(t_s+\tau/2) - \phi_2(t_s+\tau/2)}_{L^2(\Gamma_j)}}{\max_{t_s \in T_s}\NORM{\phi_1(t_s+\tau/2)}_{L^2(\Gamma_j)}}.
    \end{align}
    \item Relative error in $L^2 \left( (0,T), L^2(\Gamma_1) \right)$:
    \begin{align}\label{re_{2}2}
        re_{2,2}(\phi_1, \phi_2)_j:= \frac{\NORM{\phi_1 - \phi_2}_{L^2 \left( (0,T), L^2(\Gamma_1) \right)}}{\NORM{\phi_1}_{L^2 \left( (0,T), L^2(\Gamma_1) \right)}}.
    \end{align}
    The approximation of the time integral is done by a composite trapezoidal rule using the points of the computed time mid-steps.
\end{itemize}

%%%%%%%%%%%%%%%%%%%%%%%%%%%%%%%%%%%%%%%%%%%%%%%%%
\subsubsection{MTF Validation}\label{sec:mtf-veri}
%%%%%%%%%%%%%%%%%%%%%%%%%%%%%%%%%%%%%%%%%%%%%%%%%
We verify first the MTF method without time evolution, by solving \eqref{MTF1} for four different geometrical configurations and sources. In all four experiments, we set $\vvector = \textbf{0}$ and use the point source function $\displaystyle \phi_{e} = 1/ (4\pi \sigma_0 \NORM{\mathbf{r} - \mathbf{p_0}}_2)$ as the external applied potential.

\begin{itemize}
\item \textbf{Example 1}: One sphere centered at the origin with intracellular conductivity $\sigma_1$ different from $\sigma_0$.
\item \textbf{Example 2}: Three (aligned) spheres. The first and the third one have conductivity $\sigma_0$ (phantom spheres), while the one in the middle has a different conductivity $\sigma_1$. 
\end{itemize}

The parameters used for validation for Examples 1 and 2 for a single sphere are presented in Table \ref{space-convergence-parameters}, additional parameters for Example 2 are presented in Table \ref{space-convergence-2spheres-parameters}. In Example 1, the sphere has a different conductivity than the extracellular space, and an analytic solution can be obtained. In Figure \ref{mtf-results-converge-1-sphere-point-source} the relative errors in $L^2(\Gamma_1)$ \eqref{re_{2}} of the computed solutions for different $L$ against the analytic solution are presented. The image shows the expected exponential convergence with respect to the maximum degree of the spectral basis $L$. Example 2 involves three spheres, two of those having the same properties as the external medium, while the one in the middle is different (see Figure \ref{3-analytic-result}). {Therefore}, the traces of the latter should be equal to the ones computed without the first two, i.e.~the same as in Example 1. The relative $L^2(\Gamma_1)$ error of the difference between the analytic solution for the four traces and the numerical one corresponding to the sphere with different conductivity, is $6.06 \cdot 10^{-15}$. In Figure \ref{3-analytic-result}, $u_0^{50}$ is plotted where the only sphere showing a response to $\phi_{e}$ is the sphere in the middle that has different properties compared to the external medium. Discrete Calder\'on and jump errors are of order $10^{-16}$ considered as zero.

% They are a measure for evaluate if what it is being solved is what it should.

\begin{table}[tp]
\footnotesize
\caption{Parameters used in Section \ref{sec:mtf-veri} for Example 1 and 2 for the MTF validation. Conductivity values are from \protect\cite[Table 1]{KavianLeguebeea2014}, cell radius from \cite[Table 1]{MistaniGuittetea2019}. }
        \label{space-convergence-parameters}
    \begin{center}
        \begin{tabular}{|l|l|l|l|}
        \hline
        Parameter & Symbol      & Example 1       & Unit \\
        \hline
        Intensity & $a$             &    1   & $\mu$A \\
        Source position & $\mathbf{p_0}$ &  (0, 0, 20)   &  $\mu$m     \\
        Extracellular conductivity &$\sigma_0$   & 5        & $\mu$S/$\mu$m     \\
        Intracellular conductivity &$\sigma_1$  &  0.455         & $\mu$S/$\mu$m \\
        Cell radius & $\radio_1$   &10 &  $\mu$m \\
        Maximum degree of spherical harmonics & $L$ &50 & \\
        \hline
        \end{tabular}
    \end{center}
\end{table}

%\begin{table}[tp]
%\footnotesize
%\caption{\isa{\sout{Discrete Calder\'on errors and jump errors for Examples 1 and 2 from Section \ref{sec:mtf-veri}.}} }
%        \label{errors-one-sphere}
%    \begin{center}
%        \begin{tabular}{|l|l|l|}
%        \hline 
%        Error      & Example 1    & Example 2    \\
%        \hline
%        Discrete Calder\'on exterior  & $2.61 \cdot 10^{-17}$ & $2.38 \cdot 10^{-16}$ \\
%        Discrete Calder\'on interior    & $2.61 \cdot 10^{-17}$  &  $2.31 \cdot 10^{-16}$    \\
%        Jump error  &  $2.61 \cdot 10^{-17}$ & $2.38 \cdot 10^{-16}$ \\
%        \hline
%        \end{tabular}
%    \end{center}
%\end{table}

\begin{figure}[tp]
	\begin{center}
		\includegraphics[width=0.6\textwidth]{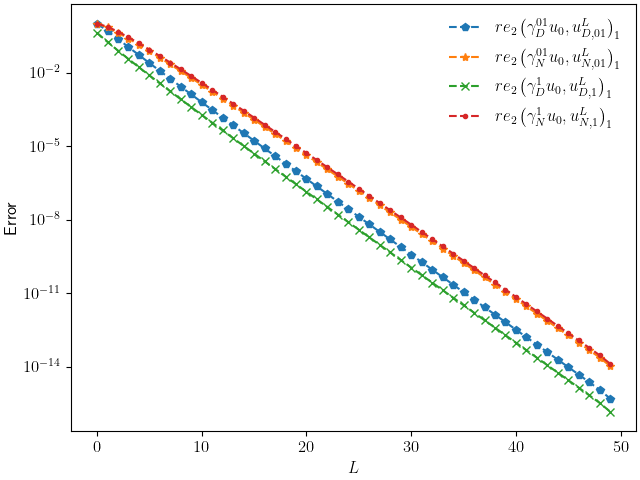}
		\caption{Error convergence for traces in Example 1 (Section \ref{sec:mtf-veri}). The relative error $L^2(\Gamma_1)$ \eqref{re_{2}} is computed against the analytic solution with parameter values in Table \ref{space-convergence-parameters}.}
		\label{mtf-results-converge-1-sphere-point-source}
	\end{center}
\end{figure}
\begin{table}[tp]
\footnotesize
\caption{Parameters used for the MTF verification with $\phi_{e} = 1/ (4\pi \sigma_0 \NORM{\mathbf{r} - \mathbf{p_0}}_2)$ in Example 2, Section \ref{sec:mtf-veri}. Conductivities are given in \protect\cite[Table 1]{KavianLeguebeea2014} and radii are in \protect\cite[Table 1]{MistaniGuittetea2019}.}
        \label{space-convergence-2spheres-parameters}
    \begin{center}
        \begin{tabular}{|l|l|l|l|}
        \hline 
        Parameter    & Symbol   & Value         & Unit \\
        \hline
        Cell 1 intracellular conductivity & $\sigma_1$      & 0.455        & $\mu$S/$\mu$m \\
        Cell 2 and 3 intracellular conductivity & $\sigma_2$, $\sigma_3$      & 5     &  $\mu$S/$\mu$m \\
        Cell 1 radius & $\radio_1$ & 10 &  $\mu$m \\
        Cell 2 radius & $\radio_2$ & 8 &  $\mu$m \\
        Cell 3 radius & $\radio_3$ & 9 &  $\mu$m \\
        Cell 1 center position & $\mathbf{p_1}$  &  (0, 0, 0)   &  $\mu$m     \\
        Cell 2 center position & $\mathbf{p_2}$  &  (25, 0, 0)   &  $\mu$m     \\
        Cell 3 center position & $\mathbf{p_3}$  &  (-24, 0, 0)   &  $\mu$m     \\
        Maximum degree of spherical harmonics & $L$ & 50 &   \\
        Quadrature degree & $L_c$ & 100 &  \\
        \hline
        \end{tabular}
    \end{center}
\end{table}
\begin{figure}[tp]
    \centering
    \subfloat[Plane $y=0$.]{
    \includegraphics[width=0.49\textwidth]{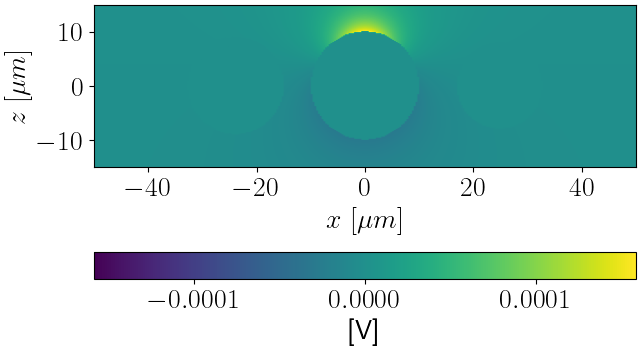}}
    \subfloat[Plane $z=0$.]{
    \includegraphics[width=0.49\textwidth]{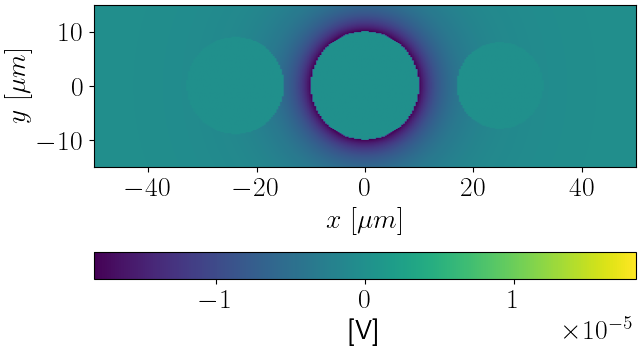}}
	\caption{Field $u_0^{50}$ of Example 2, Section \ref{sec:mtf-veri} with parameters from Table \ref{space-convergence-2spheres-parameters}.}
	\label{3-analytic-result}
\end{figure}

%%%%%%%%%%%%%%%%%%%%%%%%%%%%%%%%%%%%%%%%%%%%%%%%%
\subsubsection{Multistep semi-implicit time approximation validation: linear case}\label{sec:semiimpveri}
%%%%%%%%%%%%%%%%%%%%%%%%%%%%%%%%%%%%%%%%%%%%%%%%%
We validate the proposed time scheme by solving problem \eqref{discret-1} for a linear current with only one cell $$\displaystyle c_{m,1} \partial_t v_1 + \frac{1}{r_{m,1}} v_j = - \sigma_1 \trazaN^1 u_1,$$
where instead of $I_1^{ep}(v_1, Z_1)$ we use $r_{m,1}^{-1} v_j$. Additionally, we assume that $\phi_e$ can be factorized $\phi_e(t,\mathbf{r}) = \phi_{time}(t) \phi_{space}(\mathbf{r})$. If $\phi_{space}$ is expanded in spherical harmonics, the coefficients for the equivalent expansion of $v_1$, denoted by $v_1^{l,m}$, can be obtained by solving $$\displaystyle \partial_tv_1^{l,m} + \alpha_1^{l,m} \ v_1^{l,m} = - \beta_1^{l,m} \ \phi_{time}(t),$$ with
\begin{align*}
\alpha_1^{l,m} :=  \frac{1}{c_m R_m} + \frac{\sigma_0 \sigma_1 l (l+1) }{c_m \radio_1 ( \sigma_0(l+1) + \sigma_1 l )}, \ 
\beta_1^{l,m} := \frac{\sigma_0 \sigma_1 l (b_{d,l,m}(l+1) - b_{n,l,m} \radio_1 ) }{c_m \radio_1( \sigma_0(l+1) + \sigma_1 l ) },
\end{align*}
where $b_{d,l,m}$ and $b_{n,l,m}$ are the coefficients of degree $l$ and order $m$ for the Dirichlet and Neumann trace expansions of $\phi_{space}$ on the cell's membrane, respectively. Then, the spherical harmonic expansion coefficients of $v_1$ are 
\begin{equation*}
    v_1^{l,m}(t) = - \beta_1^{l,m} e^{- \alpha_1^{l,m} t} \int_0^t \phi_{time}(s)e^{ \alpha_1^{l,m} s} ds + v_1^{l,m}(0) e^{- \alpha_1^{l,m} t}.
\end{equation*}
We present simulation results for two different time behaviors for $\phi_e$, $\phi_{time-exp} = e^{-t}$ and $\phi_{time-cte} = 1$. We use a point source function for the spatial part of $\phi_e$. Parameters are presented in Table \ref{time-valitadion-parameters}. In Figure \ref{time-verification-results-01}, the absolute error of the difference between $\overline{v}_1^{23}$ ($\tau=2.5\cdot 10^{-2} \mu$s) and $v_1$ in space is presented for each mid-time step. We compute also $\frac{\tau^2}{4} \NORM{\partial^2_{t}v_1}_{L^2(\Gamma_j)}$ to validate the first bound in Theorem \ref{teo-time-converge}. For $\phi_{time-exp}$, the absolute error satisfies the first bound in Theorem \ref{teo-time-converge} everywhere except for the range between $0.4 \ \mu$s and $0.7 \ \mu$s, where a dip in $\frac{\tau^2}{4} \|\partial^2_{t}v_1(t)\|_{L^2(\Gamma_j)}$ occurs. This is due to the second derivative of the most significant term ($l=1$) approaching zero (see Figure \ref{plotderivative}). In contrast, for $\phi_{time-cte}$ the bound is fulfilled at all times.

\begin{table}[tp]
\footnotesize
\caption{Parameters used for the time scheme validation in Section \ref{sec:semiimpveri} where linear dynamics are assumed. The external potential is $\phi_e = I(t)/ (4\pi \sigma_0 \NORM{\mathbf{r} - \mathbf{p_0}}_2)$ and only one cell is considered. Conductivity values are given in \protect\cite[Table 1]{KavianLeguebeea2014}, the cell radius and the specific membrane capacitance are given in  \protect\cite[Table 1]{MistaniGuittetea2019}, and the specific membrane resistance is from \protect\cite[Table 1]{HJA17}. }
        \label{time-valitadion-parameters}
    \begin{center}
        \begin{tabular}{|l|l|l|l|}
        \hline
        Parameter & Symbol      & Values         & Unit \\
        \hline
        Intensity & $I(t)$             &    $e^{-t}$ and $1$    &   $\mu$A \\
        Source position & $\mathbf{p_0}$ & (0, 0, 50)  &    $\mu$m     \\
        Extracellular conductivity &$\sigma_0$   & 5.00  &      $\mu$S/$\mu$m     \\
        Intracellular conductivity &$\sigma_1$  &  $4.55 \cdot 10^{-1}$         & $\mu$S/$\mu$m \\
        Specific membrane capacitance & $c_{m,1}$ & $9.50 \cdot 10^{-3} $ & pF$/$($\mu$m$)^2$ ($=$F/$\mbox{m}^2$)  \\
        Specific membrane resistance & $r_{m,1}$ & $1.00 \cdot 10^{5} $  & M$\Omega$($\mu$m$)^2$ \\
        Cell Radius & $\radio_1$   & 7.00          & $\mu$m \\
        Length time step & $\tau $ & $2.50\cdot 10^{-2}$ & $\mu$s \\
        Final time & $T$ & 2.50 & $\mu$s \\
        Maximum degree of spherical harmonics & $L$ & 25 &  \\
        \hline
        \end{tabular}
    \end{center}
\end{table}
\begin{figure}[tp]
    \centering
    \subfloat[$\phi_{time-ext}(t)=e^{-t}$.]{
    \includegraphics[width=0.49\textwidth]{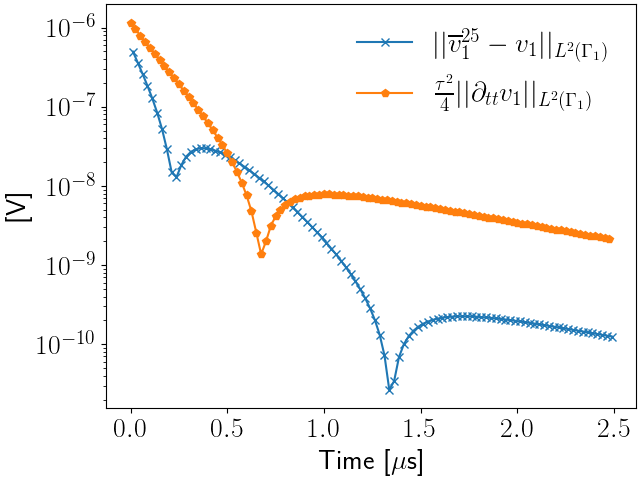}}
    \subfloat[$\phi_{time-cte}(t)=1$.]{
    \includegraphics[width=0.49\textwidth]{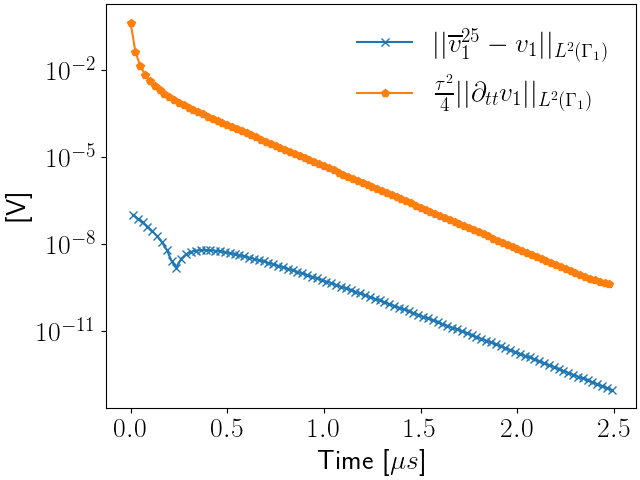}}
	\caption{Absolute error in $L^2(\Gamma_1)$ between $\overline{v}_1^{25}$ (discrete approximation) and $v_1$ (analytic solution), {as well as $\frac{\tau^2}{4} \NORM{\partial^2_{t}v_1(t)}_{L^2(\Gamma_j)}$}, plotted to verify the bound given by Theorem \ref{teo-time-converge} for the time scheme from Section \ref{sec:semiimpveri} where linear dynamics are assumed. The time step $\tau$ is $2.5\cdot10^{-2}$ $\mu$s and the rest of the parameters used are in Table \ref{time-valitadion-parameters}.}	
	\label{time-verification-results-01}
\end{figure}
\begin{figure}[tp]
	\begin{center}
		\includegraphics[width=0.49\textwidth]{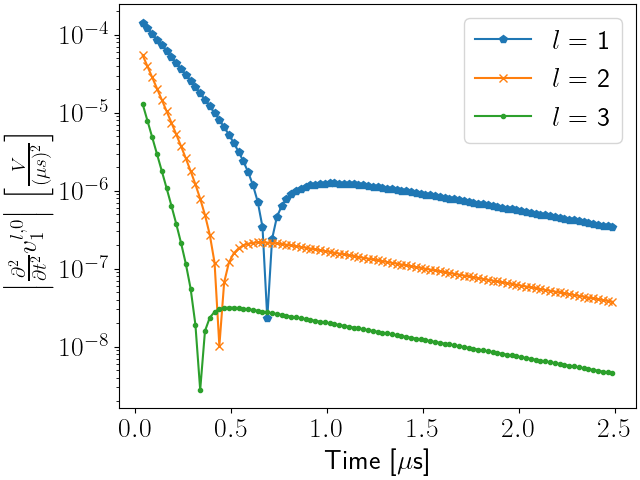}
		\caption{Absolute values of the analytically obtained second derivatives for the three most significant coefficients for the linear dynamics example from Section \ref{sec:semiimpveri} with $\phi_{time-ext}(t)=e^{-t}$. It can be seen that the dip for $l=1$ matches the dip from Figure \ref{time-verification-results-01}(a).}
		\label{plotderivative}
	\end{center}
\end{figure}
\begin{figure}[tp]
	\centering
    \subfloat[$\phi_{time-exp}(t)=e^{-t}$.]{
    \includegraphics[width=0.45\textwidth]{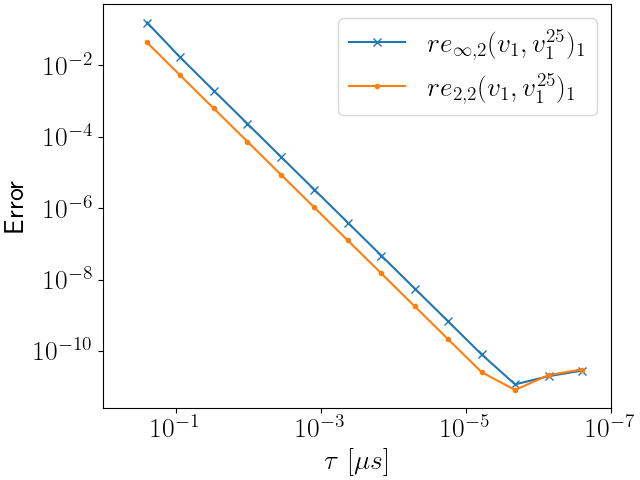}}
    \subfloat[$\phi_{time-cte}(t)=1$.]{
    \includegraphics[width=0.45\textwidth]{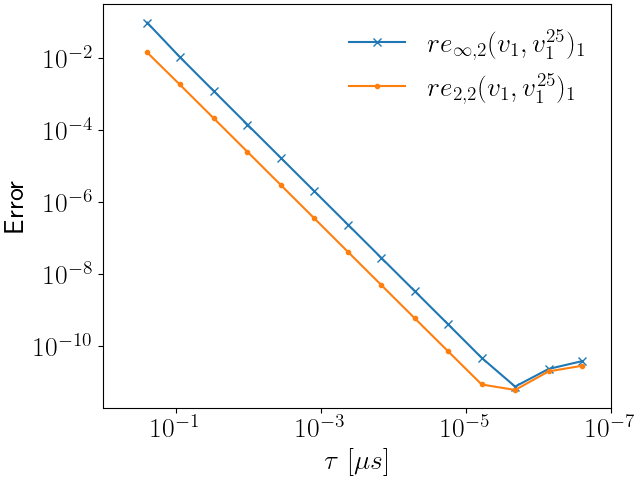}}
	\caption{Error convergence for diminishing time steps $\tau$ for the time scheme in Section \ref{sec:semiimpveri} where linear dynamics are assumed. Slopes on the log-log plot show error converges as $\tau^2$. Relative errors $re_{\infty,2}(v_1,v_1^{25})_1$ and $re_{2,2}(v_1,v_1^{25})_1$ are given in \eqref{REinf2} and \eqref{re_{2}2}, respectively. Simulation parameters can be found in Table \ref{time-valitadion-parameters}.}
	\label{time-stepconvergence}
\end{figure}

Finally, Figure \ref{time-stepconvergence} presents the relative error in time and space for decreasing values of $\tau$. We compute the error using two norms: an approximation of the $C^0 \left( (0,T), L^2(\Gamma_1) \right)$-norm taking the maximum value at each mid-step computed \eqref{REinf2}, and an approximation of the $L^2 \left( (0,T), L^2(\Gamma_1) \right)$-norm, using a composite trapezoidal rule with the computed mid-steps \eqref{re_{2}2}. We observe that the slope of the errors in the log-log plot is close to two, therefore the error decreases as $\tau^2$.

%%%%%%%%%%%%%%%%%%%%%%%%%%%%%%%%%%%%%%%%%
\subsection{Numerical Results for a Single Cell with Nonlinear Dynamics}
After having verified our numerical scheme for the linear dynamics, we now study the nonlinear dynamics for a single cell (Problem \ref{dynamic-problemEPBIE}). Note that in \cite[Theorem 6.14]{HJH18} error estimates are given in 2D for the Hodgkin-Huxley model. The estimates depend on four terms. The first two are the norms of the difference between  initial conditions and approximated ones used in the computations. The third error term is related to the spatial discretization, where a spectral basis in 2D is used, and this term is proved to decay exponentially with the total number of functions in the spatial discretization basis. Finally, the fourth error term is due to the time approximation, converging as $\tau^2$. Here, we expect a similar behavior. In other words, fixing the maximum degree of spherical harmonics $L$ used in the space discretization and decreasing the length of the time step $\tau$, we expect to see the error converging to a constant depending on $L$. Similarly, if we fix $\tau$ and increase $L$, we expect the error to converge to a constant depending on $\tau$.

\subsubsection{Time convergence for a fixed $L$} \label{sec:timeconv}

We use the parameters presented in Table \ref{non-linear-parameters-1sphere} to solve the non-linear discrete Problem \ref{dynamic-problem-EP-disc}, with external applied potential $\phi_e = 5 z \cdot 10^{-2}$, and initial conditions equal to zero. Since we no longer possess an analytic solution for comparison, we check for convergence as time steps become smaller. We remark that $L$ is fixed, and we use $L=1$, along with $L_c=2$.

Table \ref{non-linear-tau-convergence-1sphere} displays the error norms between two successively refined solutions for different time steps. These results show a convergence rate of one as the time step decreases, and thus we do not obtain the same as in \cite{HJH18}. This is due to the lesser regularity in time of the solutions, $Z_1$ is not twice differentiable as can be noticed from \eqref{ODE}. In Figure \ref{non-linear-evolution-1sphere}, we plot the evolution of the transmembrane potential $v_1^1$ for three different values of $\tau$. Though the solution shapes are similar, peaks appear at different locations and coincide as the time step decreases. Specifically, between $\tau=2.6\cdot 10^{-3} \mu$s and $\tau=2.6\cdot 10^{-4}\mu$s, there is a delay of less than $1.6\cdot 10^{-1} \ \mu$s, while between $\tau=2.6\cdot 10^{-4}\mu$s and $\tau=2.6\cdot 10^{-5}\mu$s the delay is less than $1.7\cdot10^{-2} \ \mu$s.
\begin{table}[tp]
\footnotesize
        \caption{Parameters used for the simulation of a single cell with non-linear dynamics \eqref{nonlinear-condition} in Section \ref{sec:timeconv} when studying the time convergence with fixed $L$. Parameters used are {found in} \protect\cite[Table 1]{KavianLeguebeea2014}.}
        \label{non-linear-parameters-1sphere}
    \begin{center}
        \begin{tabular}{|l|l|l|l|}
        \hline
        Parameter & Symbol      & Values         & Unit \\
        \hline
        Cell Radius & $\radio_1$   & $1.00\cdot 10^1$           & $\mu$m \\
        Time part of $\phi_e$ & $\phi_{time}$             &    $1.00$    &    \\
        Spatial part of $\phi_e$ & $\phi_{spatial}$ & 5 $z \cdot 10^{-2}$  &    V     \\
        Extracellular conductivity &$\sigma_0$   & 5.00  &      $\mu$S/$\mu$m     \\
        Intracellular conductivity &$\sigma_1$  &  $4.55 \cdot 10^{-1}$         & $\mu$S/$\mu$m \\
        Lipid surface conductivity &$S_{L,1}$  &  $1.90\cdot 10^{-6}$         & $\mu$S/($\mu$m$)^2$ \\
        Irreversible surface conductivity &$S_{ir,1}$  &  $2.50\cdot 10^{2}$         & $\mu$S/($\mu$m$)^2$ \\
        Specific membrane capacitance & $c_{m,1}$ & $9.50 \cdot 10^{-3} $ & pF$/$($\mu$m$)^2$ \\
        Transmembrane potential threshold & $V_{rev,1}$ & 1.50 & V  \\
        Electropermeabilization switch speed & $k_{ep,1}$ & $4.00\cdot 10^1$ & $\mbox{V}^{-1}$  \\
        Characteristic time of electropermeabilization & $\tau_{ep,1}$ & 1.00 & $\mu$s  \\
        Characteristic resealing time & $\tau_{res,1}$ & $1.00\cdot 10^3$ & $\mu$s  \\
        Final time & $T$ & $2.60\cdot10^1$ & $\mu$s \\
        Maximum degree of spherical harmonics & $L$ & 1 &  \\
        Quadrature degree & $L_c$ & 2 &  \\
        \hline
        \end{tabular}
    \end{center}
\end{table}
\begin{table}[tp]
\footnotesize
\caption{Error convergence for the nonlinear problem with one cell from Section \ref{sec:timeconv} for fixed $L$. Computed norms are the difference between two successive solutions. Parameters used are in Table \ref{non-linear-parameters-1sphere}.}
        \label{non-linear-tau-convergence-1sphere}
    \begin{center}
        \begin{tabular}{|l|l|c|c|}
        \hline
        $\tau_i$ & Value [$\mu$s]      & $\max_{t\in [0,T]}\left\|v_1^{1, \tau_{i+1}} - v_1^{1, \tau_{i}} \right\|_{L^2(\Gamma_1)}$         & $\max_{t\in [0,T]}\left\|Z_1^{1, \tau_{i+1}} - Z_1^{1, \tau_{i}} \right\|_{L^2(\Gamma_1)}$  \\
        \hline
        $\tau_1$ & $2.6 \cdot 10^{-3}$ &   - &  -  \\
        $\tau_2$ & $2.6 \cdot 10^{-4}$ &   $8.8 \cdot 10^{0}$ &  $4.64 \cdot 10^{-3}$ \\
        $\tau_3$ & $2.6 \cdot 10^{-5}$ &   $9.0 \cdot 10^{-1}$ & $3.02 \cdot 10^{-4}$  \\
        $\tau_4$ & $2.6 \cdot 10^{-6}$ &  $9.7 \cdot 10^{-2}$ & $3.59\cdot 10^{-5}$   \\
        $\tau_5$ & $2.6 \cdot 10^{-7}$ &  $9.7 \cdot 10^{-9}$ & $3.15 \cdot 10^{-6}$   \\
        \hline
        \end{tabular}
    \end{center}
\end{table}
\begin{figure}[tp]
    \centering
    {\includegraphics[width=0.45\textwidth]{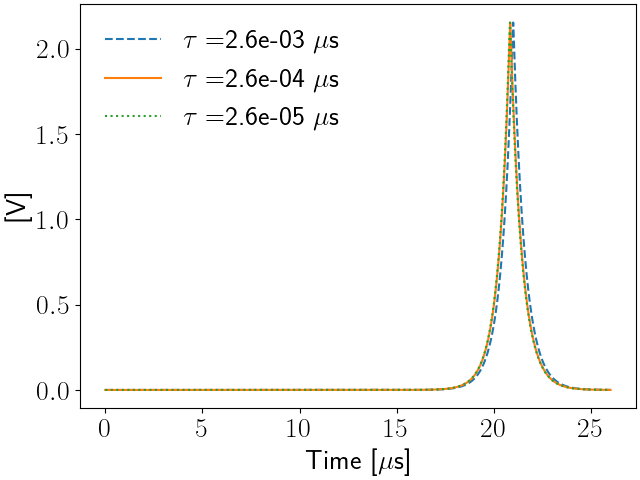}}
    {\includegraphics[width=0.45\textwidth]{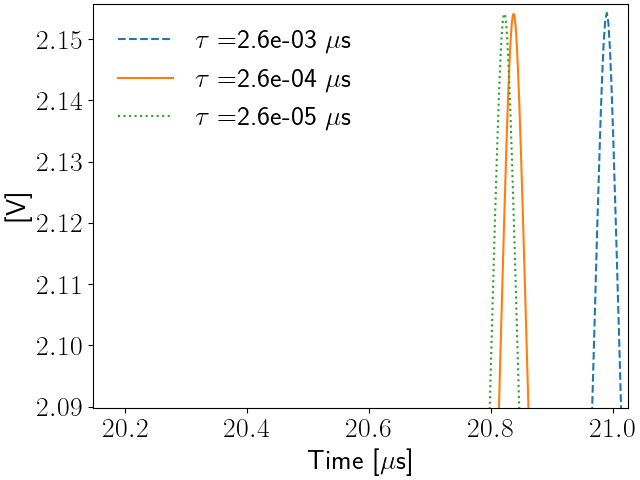}}
	\caption{Evolution of $v_1^1$ at the north pole of the cell ($\theta=0$) for different lengths of time step $\tau$ illustrating the time convergence for fixed $L$, Section \ref{sec:timeconv}. The image at the right is zoomed near to the maximum value of $v_1^1$. Parameters employed are given in Table \ref{non-linear-parameters-1sphere}.}	
	\label{non-linear-evolution-1sphere}
\end{figure}
%%%%%%%%%%%%%%%%%%%%%
\subsubsection{Spatial convergence with nonlinear dynamics}\label{sec:spaceconv}
%%%%%%%%%%%%%%%%%%%%%%
We now present numerical results for different maximum degrees of the spherical harmonics, $L=51$ and $L \in [1,2,...,36]$, computed with $L_c=150$. Given that we use a spectral discretization in space, we expect an exponential decrease in the error when increasing the maximum degree\footnote{The parameters used are provided in Table \ref{non-linear-one}. Notice that extra- and intracellular conductivities have different values from the previous simulations, and were changed to obtain a response of the impulse sooner.} $L$---recall that the number of spatial discretization functions basis is $(L+1)^2$. The external applied potential is $\phi_e = 5 z \cdot 10^{-2}$ until $t=5$ and equal to zero thereafter. Initial conditions are set to zero, and the length of the time step used is $\tau \approx 2.4\cdot10^{-3}$.

\begin{table}[tp]
\footnotesize
\caption{Parameters used in the numerical simulations in Sections \ref{sec:spaceconv} and \ref{sec:multiplecells}, with the non-linar dynamics of the electropermeabilization model. The specific choice of extra- and intracellular conductivities, different from the previous simulations, allow us to obtain a response of the impulse sooner in time. The rest of the parameters are from \protect\cite[Table 1]{KavianLeguebeea2014}. The external applied potential used is equal to zero after $t=5$ $\mu$s.}
        \label{non-linear-one}
    \begin{center}
        \begin{tabular}{|l|l|l|l|}
        \hline
        Parameter & Symbol      & Values         & Unit \\
        \hline
        Cell Radius & $\radio_1$   & $1.00\cdot10^1$           & $\mu$m \\
        External applied potential & $\phi_{e}$ & 5 $z \cdot 10^{-2}$  &    V     \\
        Extracellular conductivity &$\sigma_0$   & $1.50\cdot 10^1$  &      $\mu$S/$\mu$m     \\
        Intracellular conductivity &$\sigma_1$  &  1.50         & $\mu$S/$\mu$m \\
        Specific membrane capacitance & $c_{m,1}$ & $9.50 \cdot 10^{-3} $ & pF$/$($\mu$m$)^2$ ($=$F/$\mbox{m}^2$)  \\
        Lipid surface conductivity &$S_{L,1}$  &  $1.90\cdot 10^{-6}$         & $\mu$S/($\mu$m$)^2$ \\
        Irreversible surface conductivity &$S_{ir,1}$  &  $2.50\cdot 10^{2}$         & $\mu$S/($\mu$m$)^2$ \\
        Transmembrane potential threshold & $V_{rev,1}$ & 1.50 & V  \\
        Electropermeabilization switch speed & $k_{ep,1}$ & $4.00\cdot10^1$ & $\mbox{V}^{-1}$  \\
        Characteristic time of electropermeabilization & $\tau_{ep,1}$ & 1.00 & $\mu$s  \\
        Characteristic resealing time & $\tau_{res,1}$ & $1.00\cdot10^3$ & $\mu$s  \\
        Final time & $T$ & $1.00\cdot10^1$ & $\mu$s \\
        \hline
        \end{tabular}
    \end{center}
\end{table}

We compute the relative errors between $v_1^L$ and $v_1^{51}$, and between $Z_1^L$ and $Z_1^{51}$. The results are shown in Figure \ref{convergence-one-non-linear}. The plots are in a log-linear scale, and the errors tends to form a straight line with the slope of order $10^{-2}$, which suggests an exponential rate of convergence. Recall that  in our case $\beta$ (cf.~\eqref{beta-used}) is only $C^0$-continuous due to the discontinuity of the derivative at the origin and $\Zz_1$ is only twice differentiable in time \eqref{ODE}, worsening the rate of convergence. However, as shown in \cite{stephan_convergence_1989}, the numerical method presented should converge faster than a first-order boundary element method and twice as fast with respect to the number of functions used to construct the approximation in the worst case. While the obtained $Z_1$ is an even function in space, $v_1$ is an odd one.  Thus, the nonlinear current is an odd function in space. Since the external applied potential is an odd function, we expect $v_1$ to have an odd component, while $Z_1$ is defined by an ordinary differential equation that takes $v_1$ into an even function. Finally, in Figure \ref{vresult} we plot the evolution in time of $v_1^{17}$, $v_1^{24}$, $v_1^{35}$, and $v_1^{51}$ at the north pole. The differences between the results are more noticeable after the peak of the potential and when the cell tries to stabilize it.

\begin{figure}[tp]
    \centering
    \subfloat[Relative norms for $v^L_1$.]{    \includegraphics[width=0.45\textwidth]{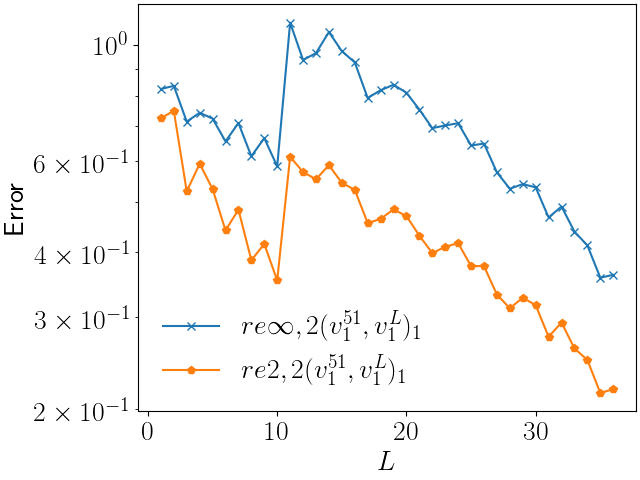}}
    \subfloat[Relative norms for $Z^L_1$.]{
    \includegraphics[width=0.44\textwidth]{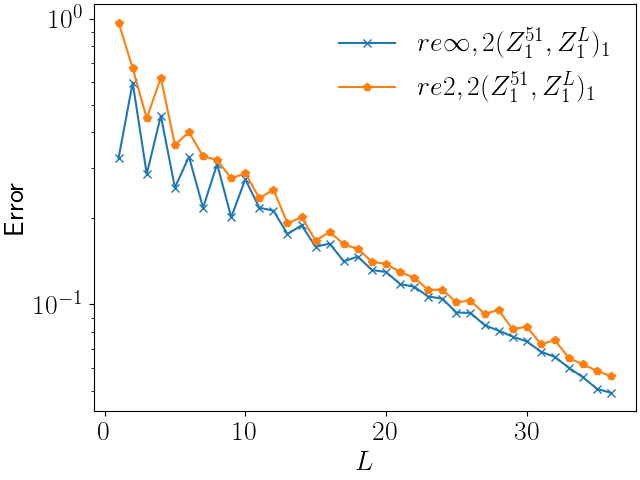}}
	\caption{Spatial convergence for the nonlinear dynamics of Section \ref{sec:spaceconv}. Relative norms are computed against an overkill of $L=51$. On the left, results for $v_1^L$, while on the right  $Z_1^L$ is displayed with  time step $\tau\approx 2.4\cdot10^{-2} \ \mu$s. The $x-$axis indicates the maximum degree used for discretization. Convergence starts from $L=11$. Plots are in log-linear scale and error tends to form a straight line with slope of approximately $-10^{-2}$, i.e.~exponential convergence. Parameters are given in Table \ref{non-linear-one}. The expressions for the relative errors $re_{\infty,2}(v_1^{51},v_1^{L})_1$ and $re_{2,2}(v_1^{51},v_1^{L})_1$ are given in \eqref{REinf2} and \eqref{re_{2}2}, respectively.}
    \label{convergence-one-non-linear}
\end{figure}

\begin{figure}[tp]
    \centering
    \subfloat[]{    \includegraphics[width=0.43\textwidth]{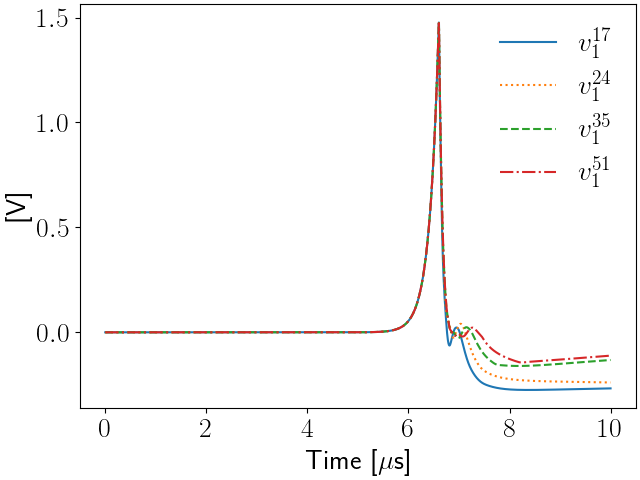}}
    \subfloat[Zoom at peak.]{
    \includegraphics[width=0.43\textwidth]{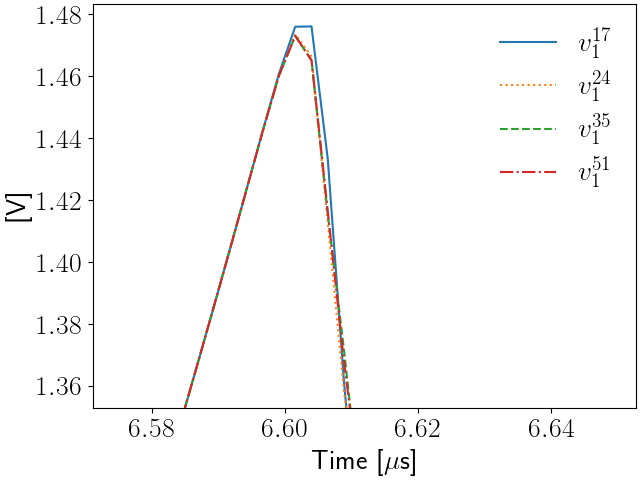}}
	\caption{Evolution of the transmembrane potentials $v_1^{17}$, $v_1^{24}$, $v_1^{35}$ and $v_1^{51}$ at the north pole of the cell ($\theta=0$) obtained in Section \ref{sec:spaceconv} where the spatial convergence for one cell in the nonlinear case is studied. The time step used is $\tau\approx 2.4\cdot10^{-2} \ \mu$s, with parameters given in Table \ref{non-linear-one}.}
 \label{vresult}
\end{figure}

\subsection{Results with multiple cells}\label{sec:multiplecells}

In previous sections, the convergence of the numerical method was studied for a single cell. We proceed now with the case of multiple cells to perform five experiments in the nonlinear case. The examples presented highlight how the distance among  cells affects the results as all cell conductivities are set to the same value $\sigma_1$.
\begin{itemize}
    \item {\bf Example 3}: Three cells aligned along the $x$-axis and far from each other, with distance between cells $18\radio_1$. 

    \item {\bf Example 4}: Three cells aligned along the $x$-axis,  near from each other, with distance between cells $\frac{1}{2}\radio_1$.

    \item {\bf Example 5}: Eight cells aligned in a cubic lattice, the nearest distance between two cells is $\frac{1}{2}\radio_1$, the first sphere is at the origin.
\end{itemize}
\begin{table}[tp]
\footnotesize
\caption{Center positions for Examples 3 and 4 from Section \ref{sec:multiplecells}, where nonlinear dynamics are simulated.}
\begin{center}
        \begin{tabular}{|c|c|c|c|c|c|c|} \hline
        Center position   & Symbol   & Example 3 & Example 4  & Unit \\ \hline
        Cell 1  & $\mathbf{p_1}$  &  (0, 0, 0) & (0, 0, 0) &   $\mu$m     \\
        Cell 2 & $\mathbf{p_2}$  &  (200, 0, 0) &(25, 0, 0) & $\mu$m     \\
        Cell 3  & $\mathbf{p_3}$  &  (-200, 0, 0)  &(-25, 0, 0) & $\mu$m     \\ \hline
        \end{tabular}
        \label{non-linear-3-position-parameters}
    \end{center}    
\end{table}
\begin{table}[tp]
\footnotesize
\caption{Positions of cells in Example 5 from Section \ref{sec:multiplecells}, where nonlinear dynamics are simulated.}
\begin{center}
        \label{non-linear-8-position-parameters}
        \begin{tabular}{|c|c|c|c|c|c|} \hline
        Center position   & Symbol   & Value in $\mu$m & Center position   & Symbol   & Value in $\mu$m  \\ \hline
        Cell 1  & $\mathbf{p_1}$  &  (0, 0, 0) & Cell 5  & $\mathbf{p_5}$  &  (0, 0, 25)   \\
        Cell 2 & $\mathbf{p_2}$  &  (25, 0, 0) &Cell 5  & $\mathbf{p_6}$  &  (25, 0, 25)  \\
        Cell 3  & $\mathbf{p_3}$  &  (0, 25, 0)  &Cell 7  & $\mathbf{p_7}$  &  (0, 25, 25)    \\
        Cell 4  & $\mathbf{p_4}$  &  (25, 25, 0)  &Cell 8  & $\mathbf{p_8}$  & (25, 25, 25)  \\ \hline
        \end{tabular}
    \end{center}    
\end{table}
\begin{figure}[tp]
    \centering
    {\includegraphics[width=0.40\textwidth]{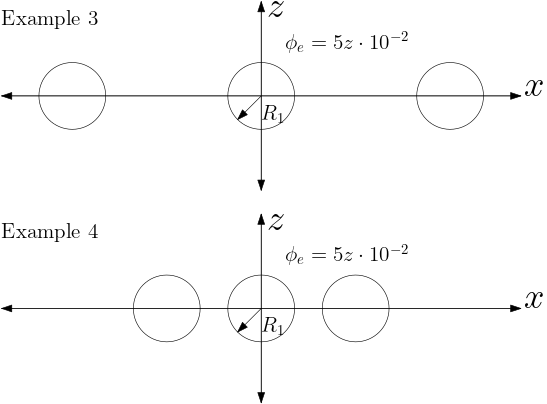}}
	\caption{Illustration of cells positions for Examples 3 and 4 in Section \ref{sec:multiplecells}.}
	\label{diagram}
\end{figure}

Cell radii and physical parameters used for Examples 3--5 are presented in Table \ref{non-linear-one}. Extra- and intracellular conductivity values were changed so as to obtain a response sooner. Cells centers in Examples 3 and 4 are given in Table \ref{non-linear-3-position-parameters} and sketched in Figure \ref{diagram}, while those in Example 5 are located at the corners of a cube of length 25 $\mu$m. (cf.~Table \ref{non-linear-8-position-parameters}). Throughout initial conditions are set to zero. The external applied potential in Examples 3--5 is $\phi_e = 5z \cdot 10^{-2}$ until $t=5$ $\mu$s and zero thereafter.

In what follows, we present  results for a time step $\tau\approx 6.1 \cdot 10^{-4}$. The maximum degree of spherical harmonics used for Examples 3 and 4 is $L=35$, while for Example 5 we set $L=25$. Quadrature degree used in all examples is $L_c=100$. Figures \ref{non-linear-evolution-example5}, \ref{non-linear-evolution-example6} and \ref{non-linear-evolution-example9} showcase the evolution of the transmembrane potentials $v_j^{L}$ and the variables $Z_j^{L}$ for each cell at their north pole.

\begin{figure}[tp]
    \centering
    {\includegraphics[width=0.43\textwidth]{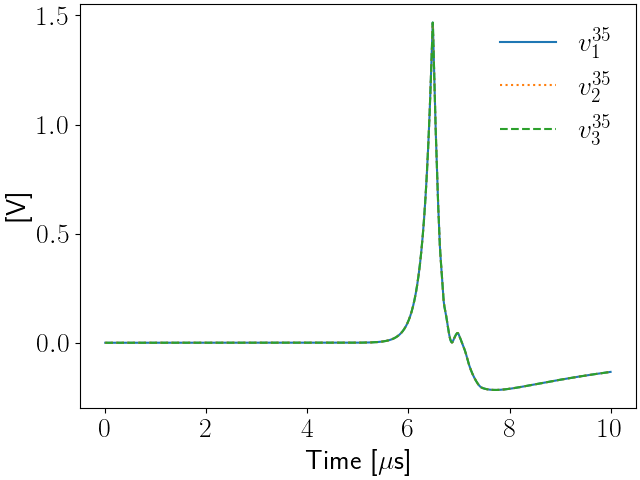}}
    {\includegraphics[width=0.43\textwidth]{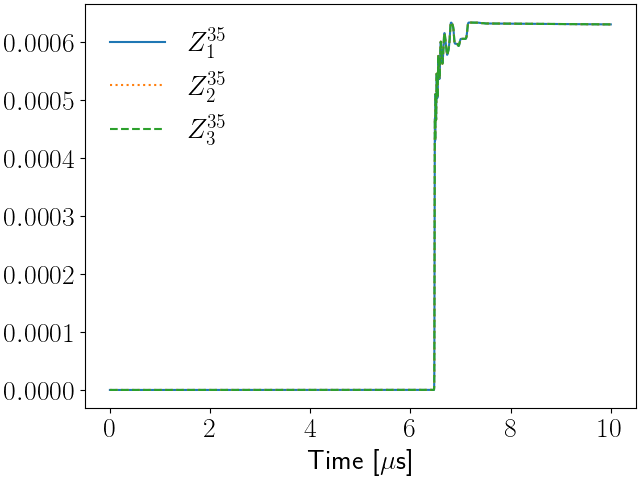}}
	\caption{Evolution of $v_j^{35}$ and $Z_j^{35}$ at the north pole of each $j$ cell ($\theta=0$), from Example 3 in Section \ref{sec:multiplecells}. The time step is $\tau\approx 6.1 \cdot 10^{-4}$. Parameters employed are found in Tables \ref{non-linear-one} and \ref{non-linear-3-position-parameters}.}
	\label{non-linear-evolution-example5}
\end{figure}
\begin{figure}[tp]
    \centering
    {\includegraphics[width=0.43\textwidth]{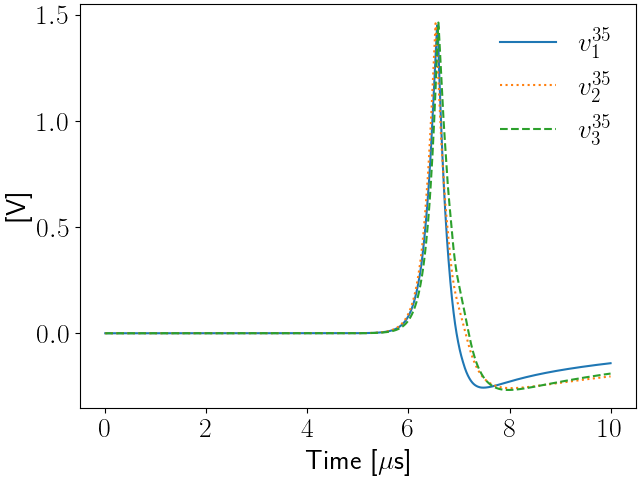}}
    {\includegraphics[width=0.43\textwidth]{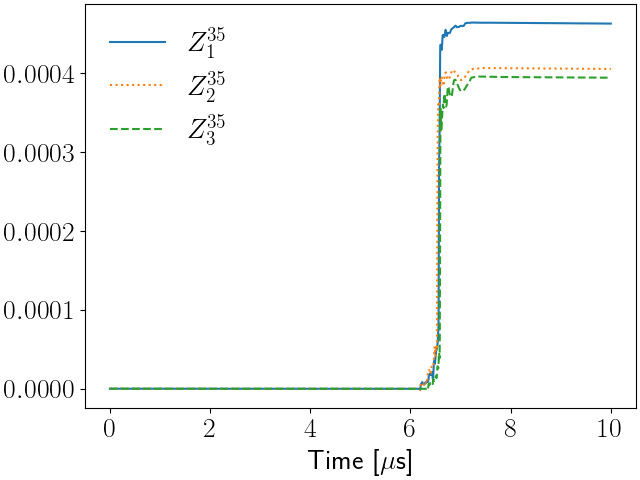}}
	\caption{Evolution of $v_j^{35}$ and $Z_j^{35}$ at the north pole of each $j$ cell ($\theta=0$), from Example 4 from Section \ref{sec:multiplecells}. Cells are near each other and the interaction among them influences the transmbembrane potential $v_j^{35}$ and $Z_j^{35}$ (cf.~Example 3 in contrast). The only difference between Examples 3 and 4 is the distance between successive cells. The time step is $\tau\approx 6.1 \cdot 10^{-4}$, and the parameters employed are given in Tables \ref{non-linear-one} and \ref{non-linear-3-position-parameters}.}
	\label{non-linear-evolution-example6}
\end{figure}
\begin{figure}[tp]
    \centering
    {\includegraphics[width=0.43\textwidth]{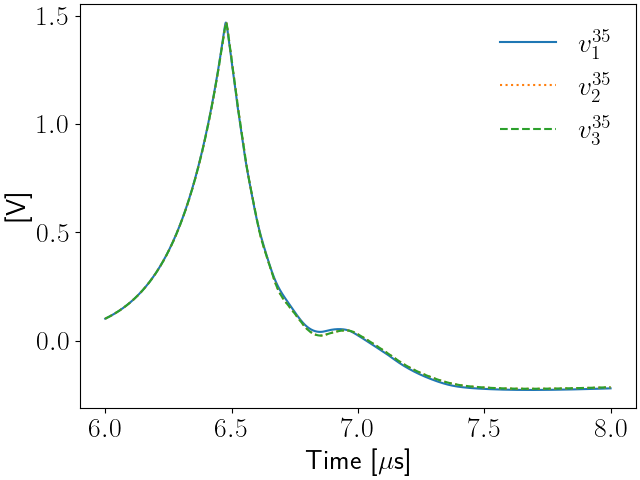}}
    {\includegraphics[width=0.43\textwidth]{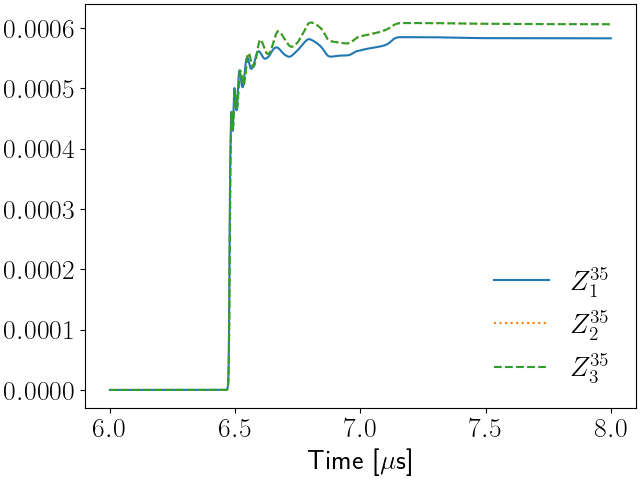}}
	\caption{Evolution of $v_j^{35}$ and $Z_j^{35}$ at the north pole of each $j$ cell ($\theta=0$), from a simulation with a refinement in time of Example 4 from Section \ref{sec:multiplecells}. Cells are near each other and the interaction among them influences the transmbembrane potential $v_j^{35}$ and $Z_j^{35}$ (cf.~Example 3 in contrast). The only difference between Examples 3 and 4 is the distance between successive cells. The time step is $\tau\approx 4.9 \cdot 10^{-4}$.}
 \label{non-linear-evolution-example6-2}
\end{figure}
\begin{figure}[tp]
    \centering
    {\includegraphics[width=0.42\textwidth]{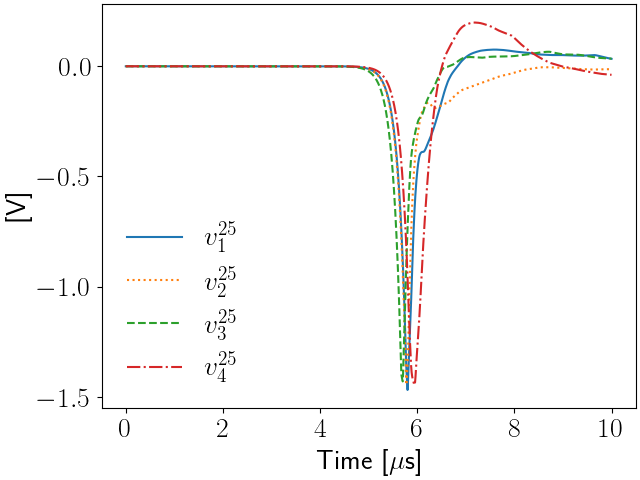}}
    {\includegraphics[width=0.42\textwidth]{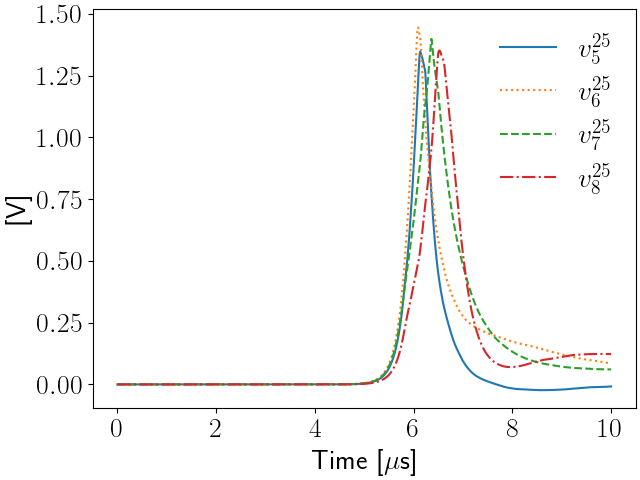}}\\
    {\includegraphics[width=0.40\textwidth]{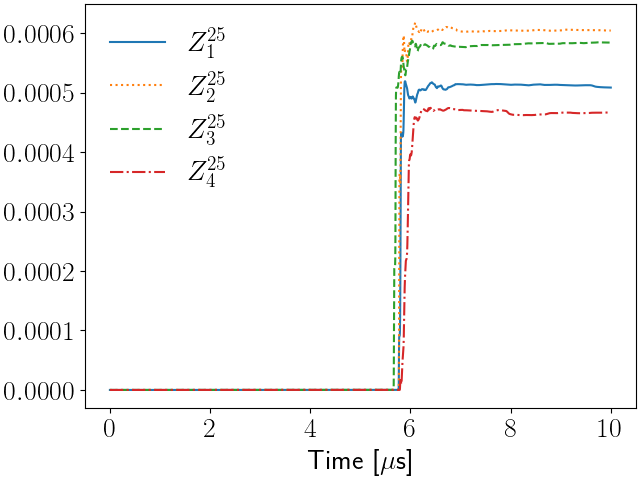}}
    {    \includegraphics[width=0.40\textwidth]{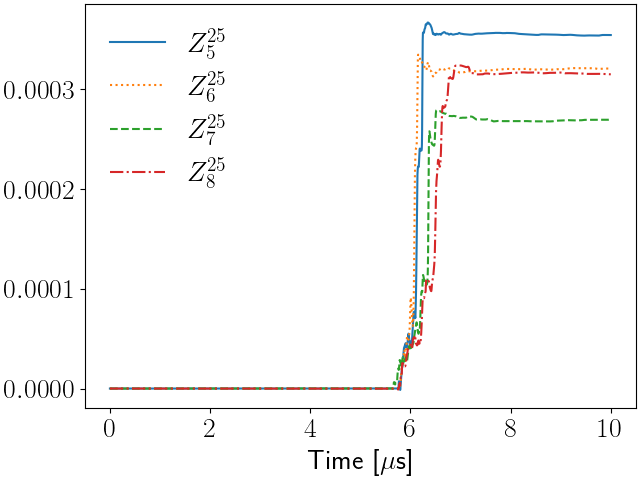}}
	\caption{Evolution of $v_j^{25}$ and $Z_j^{25}$ at the north pole of each cell ($\theta=0$), from Example 5 from Section \ref{sec:multiplecells}. The first four cells are in the plane $z=0$, while the others are in the plane $z=25$. The time step is $\tau\approx 6.1 \cdot 10^{-4} \mu$s. The rest of parameters employed are given in Tables \ref{non-linear-one} and \ref{non-linear-3-position-parameters}.}	
	\label{non-linear-evolution-example9}
\end{figure}
\begin{figure}[tp]
    \centering
    %\subfloat[$t=5 \ \mu$s.]{
    %\includegraphics[width=0.49\textwidth]{}}
    \subfloat[$t=5.75\ \mu$s.]{
    \includegraphics[width=0.40\textwidth]{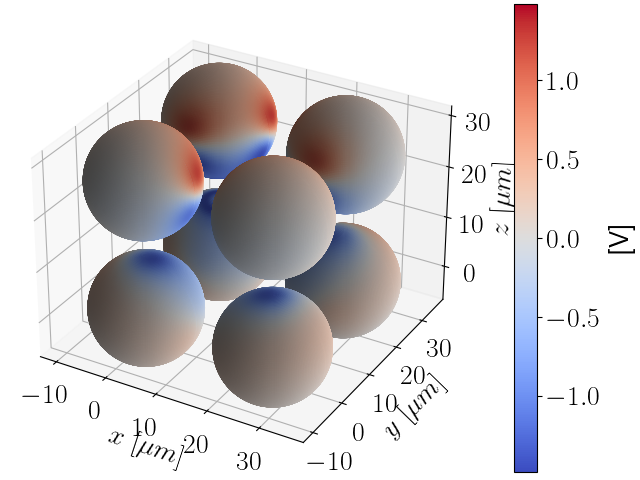}}
    \subfloat[$t=6\ \mu$s.]{
    \includegraphics[width=0.40\textwidth]{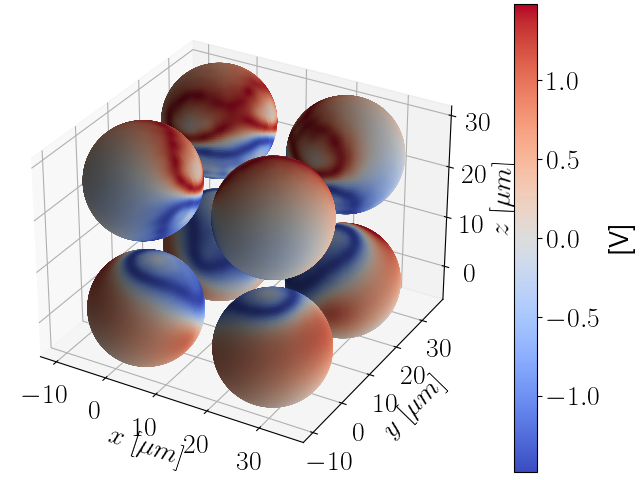}}\\
    %\subfloat[$t=6.25\ \mu$s.]{
    %\includegraphics[width=0.49\textwidth]{}}\\
    \subfloat[$t=6.5\ \mu$s.]{
    \includegraphics[width=0.40\textwidth]{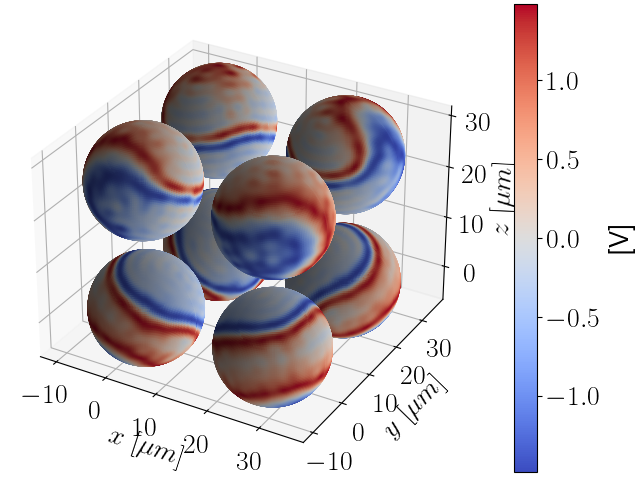}}
    \subfloat[$t=7\ \mu$s.]{
    \includegraphics[width=0.40\textwidth]{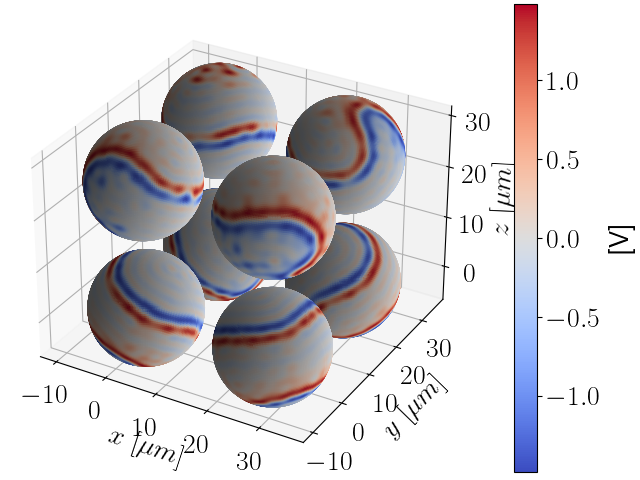}}
	\caption{Transmembrane voltages $v_j^{25}$ obtained in Example 9 of Section \ref{sec:multiplecells} at different times. The length of the time step is $\tau\approx 6.1 \cdot 10^{-4}$. Parameters employed are given in Tables \ref{non-linear-one} and \ref{non-linear-3-position-parameters}.}
 \label{surfaceplot}
\end{figure}

In Example 3, the cells are aligned along the $x$-axis and $\phi_e=5z \cdot 10^{-2}$. Therefore, the external excitation from $\phi_e$ is the same for the three spheres, i.e. the contribution of $\phi_e$ to the right-hand side is the same for each sphere. Since the cells  are relatively far from each other, there is almost no interaction among them and the potentials $v_j^{35}$ and $Z_j^{35}$ look similar, for all $j$ (see Figure \ref{vresult}). In Example 4, we take the same parameters as in Example 3, but the distance between cells is reduced to $\frac{1}{2}R_1$. Hence, the interaction among cells is stronger, and, as expected, the shapes of the potentials $v_j^{35}$ and $Z_j^{35}$ change (see Figure \ref{non-linear-evolution-example6}). One should compare these results with the previous example in Figure \ref{non-linear-evolution-example5}. Due to the symmetry and the form of $\phi_e=5z\cdot 10^{-2}$, cells 2 and 3 should have the same response at the north pole. However, they are slightly different, hinting at further time step refinement. To check this, in Figure \ref{non-linear-evolution-example6-2} we present results for a second simulation with a refined time step, between 6 $\mu$s and 8 $\mu$s, using the previous simulation at that time as initial condition. One can immediately see that the transmembrane potentials recover the stated symmetry. 

Finally, in Example 5 eight cells close to each other are simulated. In Figure \ref{non-linear-evolution-example9}, the corresponding transmembrane voltage $v_j^{25}$ and $Z_j^{25}$ at the north pole are presented. The cells with the centers in the plane $z=0$ show similar response---see Table \ref{non-linear-8-position-parameters} for the center position of each cell---, while the cells with  centers in the plane $z=25$ have similar response too while differing from cells beneath them. Figure \ref{surfaceplot} shows six snapshots of the transmembrane voltages for the eight cells. The transmembrane voltage starts changing earlier on parts of the surface closest to the rest of the cells.

%%%%%%%%%%%%%%%%%%%%
\section{Conclusions and future work}
%%%%%%%%%%%%%%%%%%%%
We studied the electropermeabilization of disjoint cells following the nonlinear dynamics from \cite{KavianLeguebeea2014} and recast the volume boundary value problem via a MTF to obtain a parabolic system of boundary integral equations on the cell membrane. This extends signficantly the numerical method presented in \cite{HJH18}. Still, the multistep semi-implicit time scheme though stable requires relatively small time steps with low convergence rates due to the poor regularity in time.
For simplicity, we assumed spherical cells but other shapes can be easily considered with similar spectral convergence rates in space. 

One can easily change the dynamics' model as long as it only involves the nonlinear term and/or variables that exclude the transmembrane potential. In this case, only the right-hand side of the system to be solved \eqref{discrete-system} and the equations corresponding to the additional variables.  Further improvements to the numerical method to be implemented in the future are matrix compression and parallelization techniques, along with an efficient solver for linear systems at each time step.

\FloatBarrier

\bibliographystyle{acm} 
\bibliography{biblio}

\end{document}